\documentclass[prl,twocolumn,superscriptaddress,aps,10pt]{revtex4-1}
\usepackage[utf8]{inputenc}
\usepackage[english]{babel}
\usepackage{amsmath}
\usepackage{amsfonts}
\usepackage{amssymb}
\usepackage{graphicx}
\usepackage{times}
\usepackage{color}
\usepackage[usenames,dvipsnames]{xcolor}

\usepackage{float}
\usepackage{bbm}
\usepackage{amsthm}
\definecolor{myurlcolor}{rgb}{0,0,0.7}
\definecolor{myrefcolor}{rgb}{0,0,0.7}
\usepackage[unicode=true,pdfusetitle,bookmarks=true,bookmarksnumbered=false,bookmarksopen=false,breaklinks=false,pdfborder={0 0 0},backref=false,colorlinks=true,linkcolor=myrefcolor,citecolor=myurlcolor,urlcolor=myurlcolor]{hyperref}

\DeclareMathOperator{\corr}{corr}
\DeclareMathOperator{\tr}{tr}

\DeclareMathOperator{\diff}{d\!}

\usepackage{color}

\usepackage{mathtools}

\newcommand{\e}{\mathrm{e}}
\newcommand{\ii}{\mathrm{i}}

\newcommand{\expect}[1]{\langle #1 \rangle}

\newtheorem{theorem}{Theorem}
\newtheorem{lemma}[theorem]{Lemma}

\begin{document}

\title{Correlation decay in fermionic lattice systems with power-law interactions at non-zero temperature}
\author{Senaida Hern\'andez-Santana}
\affiliation{ICFO-Institut de Ciencies Fotoniques, The Barcelona Institute of Science and Technology, 08860 Castelldefels (Barcelona), Spain}
\author{Christian Gogolin}
\affiliation{ICFO-Institut de Ciencies Fotoniques, The Barcelona Institute of Science and Technology, 08860 Castelldefels (Barcelona), Spain}
\affiliation{Max-Planck-Institut für Quantenoptik, Hans-Kopfermann-Straße 1, 85748 Garching, Germany}
\author{J. Ignacio Cirac}
\affiliation{Max-Planck-Institut für Quantenoptik, Hans-Kopfermann-Straße 1, 85748 Garching, Germany}
\author{Antonio Ac\'in}
\affiliation{ICFO-Institut de Ciencies Fotoniques, The Barcelona Institute of Science and Technology, 08860 Castelldefels (Barcelona), Spain}
\affiliation{ICREA-Instituci\'o Catalana de Recerca i Estudis Avan\c cats, 08010 Barcelona, Spain}

\begin{abstract}
  We study correlations in fermionic lattice systems with long-range interactions in thermal equilibrium.
  We prove a bound on the correlation decay between anti-commuting operators and generalize a long-range Lieb-Robinson type bound.
  Our results show that in these systems of spatial dimension $D$ with, not necessarily translation invariant, two-site interactions decaying algebraically with the distance with an exponent $\alpha \geq 2\,D$, correlations between such operators decay at least algebraically with an exponent arbitrarily close to $\alpha$ at any non-zero temperature.
  Our bound is asymptotically tight, which we demonstrate by a high temperature expansion and by numerically analyzing density-density correlations in the 1D quadratic (free, exactly solvable) Kitaev chain with long-range pairing.
\end{abstract}

\maketitle

Systems with long-range interactions decaying algebraically (power-law like) with the distance have many fascinating properties setting them apart from systems with merely finite-range or exponentially decaying (short range) interactions.
Very recently, a surge of interest in the properties of these models has lead to a wealth of new insights.
For example, in such systems very quick equilibration \cite{Kastner2011b,Bachelard2013,Kastner2016} and fast spreading of correlations \cite{Richerme2014,Maghrebi2015a}, as well as violations of the area law \cite{Koffel2012a} and very fast state transfer \cite{Eldredge2016} are possible.
Most importantly, they show topological effects and support Majorana edge modes \cite{Vodola2014c,Patrick2016}.
This development is to a large extent a consequence of the fact that such systems can be realized \cite{Porras2004,Deng2005,Hauke2010,Gullans2015} in extremely well controlled experiments with polar molecules \cite{Micheli2006}, ultra-cold ions \cite{Islam2011,Schneide2012,Britton2012,Jurcevic2014,Richerme2014}, and Rydberg atoms \cite{Labuhn2016}.
At the same time, many of the fundamental interactions in nature are actually algebraically decaying, such as dipole-dipole interactions, the van der Waals force, and, last but not least, the Coulomb interaction.

In some cases, realistic systems can be approximately captured by finite-range models, for example in the limit of a tight binding approximation.
The physics of such systems has been at the center of attention of theoretical condensed matter physics.
In particular, it has been proven for finite-range fermionic systems that the correlations between anti-commutating operators decay exponentially at any non-zero temperature \cite{Hastings2004a} and the same holds at zero temperature whenever there is a non-vanishing gap above the ground state \cite{Hastings2006}.
Similarly, arbitrary observables above a threshold temperature in finite-range spin and fermionic systems \cite{Kliesch2014} show exponential decay of correlations.
A similar level of understanding of the correlation decay of truly long-range interacting systems is lacking so far \cite{Patrick2016}, but is no less desirable due to their intriguing properties \cite{Cannas1996,Dyson1969c,Fischer1972,Dutta2001,Porras2004,Deng2005,Hauke2010,Dalmonte2010,Kastner2011b,Koffel2012a,Peter2012,Bachelard2013,Kastner2016,Eldredge2016,Santos2015,Patrick2016}.

The goal of our work is to advance the understanding of the decay of correlations in long-range interacting systems at finite temperature.
Our main result predicts that correlations at non-zero temperature in general two-site interacting fermionic long-range systems of arbitrary spatial dimension decay at least with essentially the same exponent as the interaction strength.
The bound holds in both clean, translation invariant systems and in such with disorder.
This result is based on recent advances \cite{Foss-Feig2014} on the dynamical spreading of correlations in long-range interacting systems.
We demonstrate that our bound is asymptotically tight by means of a high temperature expansion and by numerical simulations of a 1D Kitaev chain of fermions with long-range p-wave pairing at finite temperature, whose ground state phase diagram has been extensively studied \cite{Vodola2016b,Vodola2014c}.
As our bound (which holds for all non-zero temperatures) can be asymptotically saturated already at arbitrarily high temperature and as correlations typically do not decay faster at low temperatures, our result suggests the absence of phase transitions in such models that impact the asymptotic decay behavior of correlations.

\paragraph{Setting and notation.}

We study the correlations and their decay behavior in quantum many-body systems in thermal equilibrium at finite temperature $T$.
We focus on systems of spinless fermions in which for each site $i \in \{1\dots,L\}$ we have a fermionic creation $a_i^\dagger$ and an annihilation operator $a_i$ that satisfy the anti-commutation relations $\{a_i,a_j^\dagger\} \coloneqq a_i\,a_j^\dagger + a_j^\dagger\,a_i =\delta_{i,j}$ (a generalization to spin-full fermions is straight forward).
We denote by $n_i \coloneqq a_i^\dagger a_i$ the particle number operator of site $i$.
For $A$ and $B$ operators on the Fock space we define their correlation coefficient as
\begin{equation}
  \corr(A,B)_\beta \coloneqq \expect{A\,B}_\beta - \expect{A}_\beta\,\expect{B}_\beta ,
\end{equation}
where $\expect{\cdot}_\beta$ is the expectation value in the thermal state
\begin{equation}
  \rho_\beta \coloneqq \e^{-\beta\,H} / \tr(\e^{-\beta\,H})
\end{equation}
at inverse temperature $\beta \coloneqq 1/k_B T$.
We call an operator $A$ even (odd) if it can be written as an even (odd) polynomial of creation and annihilation operators, i.e., if it is a sum of monomials that are all products of an even (odd) number of $a_i$ and $a_i^\dagger$.
Odd operators anti-commute when they have disjoint supports.
Due to the particle number parity super-selection rule Hamiltonians of physical systems are even operators and hence $\expect{A}_\beta = 0$ whenever $A$ is an odd operator.

In what follows, we will mostly be interested in the correlations between operators $A$ and $B$ that are either particle number operators on different sites or odd operators on disjoint regions and how $\corr(A,B)$ decays with the distance of their supports.

Our result is obtained for fermionic system on a hypercubic lattice of dimension $D$ whose Hamiltonian can, for some constant $J$, be written in the form
\begin{equation}
  \label{systems}
  H = \sum_{\kappa,i,j} J_{i,j}^{(\kappa)}\,V_i^{(\kappa)}\,V_j^{(\kappa)} ,
\end{equation}
in terms of normalized operators $V_i^{(\kappa)}$, each acting on their respective site $i$, and coupling coefficients $J_{i,j}^{(\kappa)}$ satisfying $\sum_\kappa J_{i,j}^{(\kappa)} \leq J\,{d_{i,j}}^{-\alpha}$ with $d_{i,j}$ the $L_1$-distance between the sites $i$ and $j$.
Thus, our result holds for fermionic systems with quadratic Hamiltonians as well as non-quadratic ones with two-site interactions.

\paragraph{A general bound on correlation decay in fermionic long-range systems.}

We now derive the main result of this work, a general bound on the algebraic decay of correlations in fermionic systems with long-range interactions at non-zero temperature.
Concretely, for $A,B$ odd operators we obtain a bound on $\corr_\beta(A,B) = \expect{A\,B}_\beta$.
In the special case of quadratic Hamiltonians, like the Kitaev chain we consider later, our bound also yields, via Wick's theorem, a bound on density-density correlations.
Our result is based on two main ingredients:
An integral representation of $\expect{A\,B}_\beta$ that was previously used in \cite{Hastings2004a} and an extension to the fermionic case of a very recently derived Lieb-Robinson-type bound for systems with long-range interactions~\cite{Foss-Feig2014}.

The first ingredient for our proof is the following integral representation \footnote{See Supplemental Material \ref*{proof_of_lemma_integralrepresentation} for details of the proof of Lemma~\ref{lemma:integralrepresentation}.} of the expectation value $\expect{A\,B}_\beta$:
\begin{lemma}[integral representation \cite{Hastings2004a}]
  \label{lemma:integralrepresentation}
  Given a fermionic system at inverse temperature $\beta > 0$ and an
  even Hamiltonian and any two odd operators $A,B$ it holds that
  \begin{equation}\label{eq:integralrepresentation}
      \expect{A\,B}_\beta = \frac{\expect{\{A,B\}}_\beta}{2}
      + \int_0^\infty \frac{\ii}{\beta} \frac{\expect{\{A(t)-A(-t),B\}}_\beta}{\e^{\pi\,t/\beta} - \e^{-\pi\,t/\beta}} \diff t.
  \end{equation}
\end{lemma}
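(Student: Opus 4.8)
The plan is to prove the identity by splitting the product into its symmetric and antisymmetric parts and then evaluating the dynamical integral in the energy eigenbasis. Since $\expect{A\,B}_\beta = \tfrac12\expect{\{A,B\}}_\beta + \tfrac12\expect{[A,B]}_\beta$ holds algebraically for any operators, the first term of \eqref{eq:integralrepresentation} is immediate, and the entire content of the lemma is the claim that the integral on the right-hand side reproduces $\tfrac12\expect{[A,B]}_\beta$. I would therefore direct all effort at that integral.

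To that end I would diagonalize the (even) Hamiltonian as $H = \sum_n E_n\,|n\rangle\langle n|$, write matrix elements $A_{nm} = \langle n|A|m\rangle$, $B_{nm} = \langle n|B|m\rangle$, and introduce the Bohr frequencies $\omega_{nm} \coloneqq E_n - E_m$. In the Heisenberg picture $A(t) = \e^{\ii H t}\,A\,\e^{-\ii H t}$ one has $\langle n|A(t)|m\rangle = \e^{\ii\,\omega_{nm}\,t}\,A_{nm}$, so that $A(t)-A(-t)$ contributes a factor $2\ii\sin(\omega_{nm}t)$. Substituting this into $\expect{\{A(t)-A(-t),B\}}_\beta = Z^{-1}\sum_n \e^{-\beta E_n}\langle n|\{A(t)-A(-t),B\}|n\rangle$ and relabelling the summation indices in the term stemming from $B\,A$, I expect the Boltzmann weights to combine into a symmetric factor, giving
\begin{equation}
  \expect{\{A(t)-A(-t),B\}}_\beta = \frac{2\ii}{Z}\sum_{n,m}\sin(\omega_{nm}t)\,\bigl(\e^{-\beta E_n}+\e^{-\beta E_m}\bigr)\,A_{nm}\,B_{mn}.
\end{equation}

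The crux is the single Fourier-type integral that remains once this is fed back into \eqref{eq:integralrepresentation}. Using $\e^{\pi t/\beta}-\e^{-\pi t/\beta} = 2\sinh(\pi t/\beta)$ together with the standard identity $\int_0^\infty \sin(\omega t)/\sinh(\pi t/\beta)\,\diff t = (\beta/2)\tanh(\beta\omega/2)$, the $t$-integration converts each term into $\tanh(\beta\omega_{nm}/2)$ times its Boltzmann prefactor. The final simplification is the elementary hyperbolic identity $\bigl(\e^{-\beta E_n}+\e^{-\beta E_m}\bigr)\tanh(\beta\omega_{nm}/2) = \e^{-\beta E_m}-\e^{-\beta E_n}$, after which the integral collapses to $Z^{-1}\sum_{n,m}\tfrac12\bigl(\e^{-\beta E_n}-\e^{-\beta E_m}\bigr)A_{nm}B_{mn}$. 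Recognizing this, by one more relabelling, as precisely the Lehmann representation of $\tfrac12\expect{[A,B]}_\beta$ closes the argument.

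I expect the main obstacle to be technical rather than conceptual: establishing and correctly normalizing the kernel integral $\int_0^\infty \sin(\omega t)/\sinh(\pi t/\beta)\,\diff t$, and keeping the index bookkeeping that symmetrizes the Boltzmann factors free of sign errors. The integrand is bounded near $t=0$ (where it tends to $\omega\beta/\pi$) and decays exponentially, so the integral converges; for a finite lattice the sums over $n,m$ are finite and the interchange of summation and integration needs no further justification. Degenerate pairs $E_n = E_m$ require no separate handling, since both $\sin(\omega_{nm}t)$ and $\e^{-\beta E_n}-\e^{-\beta E_m}$ vanish there. Notably, the evenness of $H$ and oddness of $A,B$ are not needed for the identity itself, but guarantee via the parity super-selection rule that it is the physically relevant representation: then $\expect{A}_\beta = \expect{B}_\beta = 0$, and the anticommutator boundary term drops out whenever $A$ and $B$ have disjoint support. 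An alternative route, which I would keep in reserve, exploits the same structure complex-analytically, deforming the time contour by $\ii\beta$ and using the KMS condition, for which the poles of $1/\sinh(\pi t/\beta)$ are tailor-made.
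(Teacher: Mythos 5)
Your proof is correct, and it takes a recognizably different route from the paper's. The paper works ``forward'': it decomposes $A$ into frequency components $A_\omega$ (defined via $(A_\omega)_{ij} = A_{ij}\,\delta(E_i-E_j-\omega)$), derives the KMS-type relation $\expect{B\,A_\omega}_\beta = \e^{\beta\omega}\expect{A_\omega B}_\beta$ to get $\expect{A_\omega B}_\beta = (1+\e^{\beta\omega})^{-1}\expect{\{A_\omega,B\}}_\beta$, and then \emph{constructs} the time integral by expanding the Fermi function in its Mittag--Leffler pole series $(1+\e^{\beta\,\omega})^{-1} = 1/2-\beta^{-1}\sum_{n\ \mathrm{odd}}(\omega-\ii\,n\,\pi/\beta)^{-1}$ and writing each pole as $\ii\int_0^\infty \e^{-(\ii\omega + n\pi/\beta)t}\,\diff t$. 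You instead work ``backward'': you split $\expect{A\,B}_\beta$ into anticommutator and commutator halves, and \emph{verify} that the given integral reproduces $\tfrac12\expect{[A,B]}_\beta$ by evaluating the kernel integral $\int_0^\infty \sin(\omega t)/\sinh(\pi t/\beta)\,\diff t = (\beta/2)\tanh(\beta\omega/2)$ in the Lehmann representation and collapsing with the identity $(\e^{-\beta E_n}+\e^{-\beta E_m})\tanh(\beta\omega_{nm}/2)=\e^{-\beta E_m}-\e^{-\beta E_n}$ (all of which I have checked; your intermediate sum, signs, and normalizations are right). The two hinges are equivalent special-function facts --- your tabulated sine/sinh integral is what the pole expansion computes --- but your version is more self-contained and, working with discrete sums rather than $\delta$-function frequency components, cleaner to make rigorous on a finite lattice; the paper's version has the advantage of explaining where the formula comes from rather than pulling the kernel out of a table. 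Your side remarks are also accurate: the identity indeed holds for arbitrary $A$, $B$, and $H$ (the paper's proof never uses parity either), with evenness/oddness only entering downstream, when $\{A,B\}=0$ for disjointly supported odd operators is used in Theorem~\ref{theorem:power-law-deca-of-correlations}.
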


Lieb and Robinson~\cite{Lieb1972} first proved that the propagation of information in quantum spin systems with short-range interactions is characterized by a group velocity bounded by a finite constant, which leads to a light-cone-like causality region.
This results has since been generalized and improved in various aspects \cite{Hastings2004,Nachtergaele11} (see also \cite{Kliesch2013} for a review).
Hastings and Koma~\cite{Hastings2006} proved an upper-bound on the group velocity that grows exponentially in time in systems 
with 
power-law decaying interactions with exponent $\alpha>D$. 
Improving upon this, Gong et al.~\cite{Gong2014} derived a bound for $\alpha > D$, that consists of a exponentially and a power-law like decaying contribution.
Foss-Feig et al.~\cite{Foss-Feig2014} proved a Lieb-Robinson type bound with a group-velocity bounded by a power-law for two-site long-range interacting spin systems with the same form as in Eq.~\eqref{systems} for $\alpha > 2\,D$.
Further, Matsuta et al.~\cite{Matsuta2016} proved a closely related bound for long-range interacting spin systems for all $\alpha > D$.
For $\alpha < D$ energy is no longer extensive and Lieb-Robinson-like bounds can only be achieved \cite{Eisert2013b} when time is rescaled with the system size \cite{Storch2015a}.

For the purpose of our proof, we extend the Lieb-Robinson bound obtained by Foss-Feig et al.~\cite{Foss-Feig2014} to fermionic systems.
Here it takes the form of a bound on the operator norm $\|\ \|$ of the anti-commutator of odd operators:
\begin{lemma}[Lieb-Robinson-like bound for fermionic long-range systems]
  \label{lemma:long-range-lieb-robinson}
  Consider a fermionic system on a hypercubic lattice of dimension $D$.
  Let $\alpha>2\,D$ and $\gamma\coloneqq(1+D)/(\alpha-2\,D)$.
  Assume that the Hamiltonian can be written in the form~\eqref{systems} with $J$ a constant.
  Then, for any two odd operators $A$ and $B$ separated by a distance $l$ there exist constants $c_0$ and $c_1$, independent of the system size, $l$, and $t$, such that
  \begin{equation}\label{eqf05}
    \|\{A(t),B\}\| \leq c_0\,\e^{v\,|t|-l/|t|^{\gamma}} + c_1\,\frac{|t|^{\alpha\,(1+\gamma)}}{l^{\alpha}} ,
  \end{equation}
  with $v \leq 8\,J\,\exp(1)\,2^D$.
\end{lemma}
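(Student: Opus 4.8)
The plan is to lift the spin-system Lieb–Robinson bound of Foss-Feig et al.~\cite{Foss-Feig2014} to the fermionic setting by re-running its proof with ordinary commutators replaced by the graded (super-)commutator. For operators $X,Y$ of definite parity $p(X),p(Y)\in\{0,1\}$ set
\begin{equation}
  [X,Y]_{\pm} \coloneqq X\,Y - (-1)^{p(X)\,p(Y)}\,Y\,X ,
\end{equation}
so that $[X,Y]_{\pm}$ is the usual commutator unless both operators are odd, in which case it equals the anti-commutator $\{X,Y\}$. Two parity-definite operators with disjoint supports satisfy $[X,Y]_{\pm}=0$; in particular odd operators anti-commute at a distance, which is precisely the locality property that the Lieb–Robinson machinery exploits. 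Because $H$ is even, $[H,\,\cdot\,]_{\pm}=[H,\,\cdot\,]$ is the ordinary commutator, the Heisenberg equation $\tfrac{\diff}{\diff t}A(t)=\ii\,[H,A(t)]$ is unchanged, and $A(t)$ stays odd for all $t$. Hence the quantity to be controlled, $\|\{A(t),B\}\|=\|[A(t),B]_{\pm}\|$, is the exact fermionic analog of the spin commutator norm.

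First I would record the two algebraic inputs that drive the recursion. The graded bracket obeys the graded Leibniz rule
\begin{equation}
  [X\,Y,Z]_{\pm} = X\,[Y,Z]_{\pm} + (-1)^{p(Y)\,p(Z)}\,[X,Z]_{\pm}\,Y ,
\end{equation}
which distributes a bracket over the two-site terms $V_i^{(\kappa)}\,V_j^{(\kappa)}$ of $H$ even when the individual normalized factors $V_i^{(\kappa)}$ are odd, together with the sign-insensitive estimate $\|[X,Y]_{\pm}\|\le 2\,\|X\|\,\|Y\|$. Since the phases $(-1)^{p(\cdot)\,p(\cdot)}$ have modulus one, every triangle-inequality step of the Foss-Feig derivation yields numerically identical bounds to the spin case.

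Next I would assemble the Lieb–Robinson series for $\|[A(t),B]_{\pm}\|$. Iterating the integral relation coming from the Heisenberg equation produces a Dyson-type series indexed by lattice paths connecting $\supp A$ to $\supp B$, each hop weighted by a coupling $J_{i,j}^{(\kappa)}$ and each order carrying the usual factor $(2|t|)^{k}/k!$; the vanishing of $[\,\cdot\,,\,\cdot\,]_{\pm}$ at disjoint support ensures that only connected paths survive, so the series is formally the same as in~\cite{Foss-Feig2014}. I would then bound the path sums using $\sum_\kappa J_{i,j}^{(\kappa)}\le J\,d_{i,j}^{-\alpha}$ and the hypercubic-lattice sums, whose convergence for $\alpha>D$ together with the sharper bookkeeping required for $\alpha>2\,D$ splits the estimate into a short-distance, exponentially-in-time growing contribution $c_0\,\e^{v|t|-l/|t|^{\gamma}}$ and a power-law tail $c_1\,|t|^{\alpha(1+\gamma)}/l^{\alpha}$, fixing $\gamma=(1+D)/(\alpha-2\,D)$ and $v\le 8\,J\,\exp(1)\,2^{D}$, with the $2^{D}$ and $\exp(1)$ tracing back to the lattice coordination sums and to the exponentiation of the series, respectively.

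The main obstacle will be the bookkeeping that certifies the reduction is \emph{exact} rather than merely analogous: one must verify that each step of the Foss-Feig path expansion survives the replacement of commutators by graded commutators, in particular that distributing a bracket over a two-site term with odd factors via the graded Leibniz rule never introduces a factor the norm estimate cannot absorb, and that the single-site normalization of the $V_i^{(\kappa)}$ is exactly what makes the per-hop weight coincide with the spin case. Once this is settled, the quantitative content—the convergence conditions that pin down $\gamma$ and the crossover between the two terms—is inherited verbatim from~\cite{Foss-Feig2014}, and the constants $c_0,c_1$ and $v$ follow with at most the harmless factor of two already contained in $\|[X,Y]_{\pm}\|\le 2\,\|X\|\,\|Y\|$.
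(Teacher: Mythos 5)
Your fermionic bookkeeping is sound and matches what the paper's proof actually needs: the graded bracket of parity-definite operators vanishes at disjoint supports, an even $H$ preserves parity so that $\|\{A(t),B\}\|$ is the right object to control, and the sign factors never degrade a norm estimate. One small correction on the algebra: the step that drives the recursion is not the graded Leibniz rule you state but a Jacobi-type identity mixing both brackets --- the paper uses $\{ A, [B,C] \} = \{ C, [A,B] \} + [B, \{ C,A \}]$ (with $B$ the even generator) in place of the Jacobi identity, which is the specialization of the graded Jacobi identity you would have to invoke but never write down.

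The genuine gap is in your account of the long-range machinery itself. You describe the quantitative core as a Dyson-type series indexed by lattice paths, each order carrying a factor $(2|t|)^{k}/k!$ and each hop a coupling $J_{i,j}^{(\kappa)}$. That is the Hastings--Koma argument \cite{Hastings2006}: summing such a series against couplings bounded by $J\,d_{i,j}^{-\alpha}$ yields a bound of the form $(\e^{v|t|}-1)/l^{\alpha}$, i.e.\ an exponentially growing light cone, and no ``sharper bookkeeping'' of that same series produces either the term $\e^{v|t|-l/|t|^{\gamma}}$ or the exponent $\gamma=(1+D)/(\alpha-2D)$ of Eq.~\eqref{eqf05}. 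The proof of \cite{Foss-Feig2014}, which the lemma ports to fermions, is structurally different: one splits $H$ into a finite-range part containing all couplings with $d_{i,j}\le\chi$ and a long-range remainder; uses standard Lieb--Robinson quasi-locality to decompose the finite-range evolution as $\mathcal{A}(t)=\sum_{l}\mathcal{A}^{l}(t)$ with a velocity bounded uniformly in $\chi$; treats the remainder in the interaction picture via the two-time anti-commutator $C_r^l(t,\tau)$ built from $\mathcal{A}^l(t)$ and the interaction-picture image of $B$; and derives a recursion whose effective long-range velocity obeys $v_\chi \le \vartheta\, t^{D}\,\chi^{2D-\alpha}$. This last estimate is the one and only place where $\alpha>2D$ (rather than $\alpha>D$) enters, and optimizing the cutoff $\chi$ as a function of $t$ is what generates the two-term structure of the bound. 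Since your sketch contains no cutoff, no interaction picture, and no mechanism that could single out $2D$, executing it as written would at best reproduce a Hastings--Koma-type bound for fermions, not the lemma.
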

\noindent The proof of Lemma~\ref{lemma:long-range-lieb-robinson} follows the general strategy of \cite{Foss-Feig2014}.
We explain all necessary technical modifications in \footnote{See Supplemental Material \ref*{ms:proof_of_RL_bound} for details of the proof of Lemma~\ref{lemma:long-range-lieb-robinson}.}.

The main result of this work is that correlations in fermionic systems with two-site long range interactions at non-zero temperature decay at least algebraically with an exponent essentially given by the exponent $\alpha$ of the decay of the long-range interactions:
\begin{theorem}[Power-law decay of correlations]
  \label{theorem:power-law-deca-of-correlations}
  Consider a fermionic system on a $D$ dimensional hypercubic lattice with a Hamiltonian of the form given in~\eqref{systems} with $J$ a constant and $\alpha > 2\,D$.
  For any two odd operators $A,B$, denoting by $l$ the distance between their supports, then for any $0 < \epsilon < 1$
  \begin{equation}
    \| \corr(A,B)_\beta \| \in \mathcal{O}(l^{-(1-\epsilon)\,\alpha}) \quad (l \to \infty) .
  \end{equation}
\end{theorem}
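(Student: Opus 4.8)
The plan is to chain the two lemmas: use Lemma~\ref{lemma:integralrepresentation} to trade the static correlator for a time integral of dynamical anti-commutators, and then feed in the Lieb--Robinson bound of Lemma~\ref{lemma:long-range-lieb-robinson}. First I would observe that, since $A$ and $B$ are odd and $H$ is even, the particle-number parity super-selection rule forces $\expect{A}_\beta=\expect{B}_\beta=0$, so that $\corr(A,B)_\beta=\expect{A\,B}_\beta$. Because $A$ and $B$ have disjoint supports (they are separated by $l>0$) and are odd, they anti-commute, i.e. $\{A,B\}=0$; hence the boundary term $\expect{\{A,B\}}_\beta/2$ in Eq.~\eqref{eq:integralrepresentation} vanishes identically and only the time integral survives.

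Next I would bound the integrand. Taking absolute values and using $|\expect{\{A(t)-A(-t),B\}}_\beta|\le\|\{A(t),B\}\|+\|\{A(-t),B\}\|$ together with $\e^{\pi t/\beta}-\e^{-\pi t/\beta}=2\sinh(\pi t/\beta)$ gives $|\corr(A,B)_\beta|\le(1/\beta)\int_0^\infty(\|\{A(t),B\}\|+\|\{A(-t),B\}\|)/(2\sinh(\pi t/\beta))\,\diff t$. For the anti-commutator norm I would use two complementary estimates: the Lieb--Robinson bound~\eqref{eqf05} of Lemma~\ref{lemma:long-range-lieb-robinson}, which is small and useful at short times, and the trivial unitarity bound $\|\{A(\pm t),B\}\|\le 2\,\|A\|\,\|B\|$, which I would invoke once the right-hand side of~\eqref{eqf05} stops being small.

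The heart of the argument is then a split of the time integral at a cutoff $t^*$, which I would take to grow logarithmically, $t^*=\lambda\,\beta\,\ln l$ for a suitable $\lambda>0$. For $t>t^*$ I would use the trivial bound; since $1/\sinh(\pi t/\beta)\sim 2\,\e^{-\pi t/\beta}$ at large $t$, this tail contributes at most of order $\e^{-\pi t^*/\beta}=l^{-\pi\lambda}$, which is made $\le l^{-(1-\epsilon)\,\alpha}$ by taking $\lambda$ large enough. For $t\le t^*$ I would use~\eqref{eqf05}: the power-law piece, integrated against the convergent weight $\int_0^\infty t^{\alpha(1+\gamma)}/\sinh(\pi t/\beta)\,\diff t=\mathcal{O}(1)$, yields a contribution of order $l^{-\alpha}$, while the exponential piece is negligible because on $[0,t^*]$ one has $\e^{-l/t^\gamma}\le\e^{-l/(t^*)^\gamma}$ with $l/(t^*)^\gamma\to\infty$ faster than any power of $\ln l$, so it beats the merely polynomial growth $\e^{v\,t^*}=l^{\mathcal{O}(1)}$ and the resulting contribution is super-polynomially small. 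Collecting the three pieces bounds $|\corr(A,B)_\beta|$ by $\mathcal{O}(l^{-(1-\epsilon)\,\alpha})$, as claimed.

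The main obstacle is taming the factor $\e^{v\,|t|}$ in~\eqref{eqf05}, which grows in time and would make the integral diverge if~\eqref{eqf05} were used naively out to $t=\infty$, in particular at low temperature where the thermal weight decays too slowly to compensate. The whole purpose of the cutoff is to confine the use of~\eqref{eqf05} to times well inside the effective light cone, $t^*\ll l^{1/(1+\gamma)}$, where the super-exponential factor $\e^{-l/t^\gamma}$ dominates, while simultaneously pushing $t^*$ far enough out (logarithmically in $l$) that the thermal weight $1/\sinh(\pi t/\beta)$ suppresses the trivial-bound tail to the required power of $l$. It is precisely this trade-off between the light-cone constraint and the thermal suppression that fixes the attainable exponent at a value approaching $\alpha$, and hence, for every $0<\epsilon<1$, delivers the stated decay.
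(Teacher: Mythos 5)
Your proposal has the same skeleton as the paper's proof: Lemma~\ref{lemma:integralrepresentation} with the boundary term killed by $\{A,B\}=0$, a time cutoff splitting the integral, Lemma~\ref{lemma:long-range-lieb-robinson} inside the effective light cone, and the trivial bound $\|\{A(\pm t),B\}\|\le 2\,\|A\|\,\|B\|$ outside. Where you genuinely depart from the paper is in the two quantitative choices, and your choices are in fact sharper. The paper bounds the thermal weight by $(\e^{\pi t/\beta}-\e^{-\pi t/\beta})^{-1}\le\beta/(2\pi t)$ and takes a polynomially growing cutoff $\tau(l)=(l/v)^{1/(\gamma+1)}\,l^{-\eta}$; the power-law part of the Lieb-Robinson bound then contributes $\propto\tau(l)^{\alpha(1+\gamma)}/l^{\alpha}=v^{-\alpha}\,l^{-\eta(1+\gamma)\alpha}$, which is the dominant term and the sole origin of the loss $\epsilon=1-\eta\,(1+\gamma)$. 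You instead retain the exponentially decaying weight, so the power-law part is integrated against the convergent weight $\int_0^\infty t^{\alpha(1+\gamma)}/(2\sinh(\pi t/\beta))\,\diff t<\infty$ and contributes $\mathcal{O}(l^{-\alpha})$ (with a $\beta$-dependent constant), while a merely logarithmic cutoff $t^*=\lambda\,\beta\ln l$ with $\pi\lambda\ge\alpha$ suppresses the tail to $\mathcal{O}(l^{-\pi\lambda})$. Carried through, your argument proves the stronger statement $\|\corr(A,B)_\beta\|\in\mathcal{O}(l^{-\alpha})$, which implies the theorem for every $0<\epsilon<1$ and shows that the $\epsilon$ in the exponent is an artifact of the paper's estimates rather than of the method.

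There is, however, one step you must repair. For the exponential piece on $[0,t^*]$ you pull both suprema $\e^{v t}\le\e^{v t^*}$ and $\e^{-l/t^\gamma}\le\e^{-l/(t^*)^\gamma}$ out of the integral; what remains is $\int_0^{t^*}(2\sinh(\pi t/\beta))^{-1}\,\diff t$, which diverges logarithmically at $t=0$, so the bound as written is vacuous. The conclusion (super-polynomial smallness) is correct, but you need the device the paper uses: bound the weight by $\beta/(2\pi t)$ and observe that $g(t)=\e^{v t-l/t^\gamma}/t$ is monotonically increasing on $[0,(\gamma\,l)^{1/\gamma}]$, an interval that contains $[0,t^*]$ for all large $l$ since $t^*$ grows only logarithmically; then $\int_0^{t^*}g(t)\,\diff t\le t^*\,g(t^*)=\e^{v t^*-l/(t^*)^\gamma}=l^{v\lambda\beta}\,\e^{-l/(\lambda\beta\ln l)^{\gamma}}$, which indeed decays faster than any power of $l$. (Equivalently, keep a factor $\e^{-l/(2t^\gamma)}$ inside the integrand to cancel the $1/t$ singularity before taking suprema.) With that patch your proof is complete.
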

Before we present the proof (which actually yields a concrete bound with calculable prefactors) of this theorem, let us interpret the result.
It says that the correlations between any two odd (and therefore anti-commuting) operators in long-range interacting fermionic systems in thermal equilibrium at non-zero temperature decay at least power-law like at long distances, with an exponent that is arbitrarily close to the exponent $\alpha$ of the long-range interactions.
This holds for systems with an arbitrary spatial dimension $D$ as long as $\alpha \geq 2\,D$.

\begin{proof}[Proof of Theorem~\ref{theorem:power-law-deca-of-correlations}]
  We start by using Lemma~\ref{lemma:integralrepresentation}.
  As $\{A,B\} = 0$ only the second term from Eq.~\eqref{eq:integralrepresentation} is non-zero.
  We split up the integral in this term $I = I_{\leq\tau(l)} + I_{>\tau(l)}$ into an integral $I_{\leq\tau(l)}$ from time zero up to some value $\tau(l)$ (whose dependence on $l$ we will chose later) and the rest $I_{>\tau(l)}$.
  We bound these two integrals separately.
  Using that $|\expect{\{ A(t)- A(-t),B \}} _\beta| \leq 4\,\|A\|\,\|B\|$, we find
  \begin{equation}\label{eqf08}
    |I_{>\tau(l)}| \leq \left| \int_{\tau(l)}^{\infty} \frac{1}{\beta}\,\frac{4\,\|A\|\,\|B\|}{ \e^{\pi\,t/\beta}-\e^{-\pi\,t/\beta}}\,\diff t \right| .
  \end{equation}
  The integral satisfies
  \begin{equation}
    \int_{\tau(l)}^{\infty}\frac{1}{\beta}\,\frac{\diff t}{\e^{\pi\,t/\beta}-\e^{-\pi\,t/\beta}} \leq \frac{\pi^{-1}}{ \e^{\pi\,\tau(l)/\beta} - \e^{-\pi\,\tau(l)/\beta}}
  \end{equation}
  and therefore we have
  \begin{equation}\label{eq1B3}
    |I_{>\tau(l)}| \leq \frac{c_2/\pi}{\e^{\pi\,\tau(l)/\beta}-\e^{-\pi\,\tau(l)/\beta}} 
  \end{equation}
  with $c_2 \coloneqq 4\,\|A\|\,\|B\|$.

  For the second term $I_{<\tau(l)}$ we use that $| \expect{\{A(t),B\}}_\beta| \leq \| \{ A(t),B \} \|$ so that
  \begin{equation}\label{eqf090}
    |I_{<\tau(l)}| \leq \int_0^{\tau(l)} \frac{1}{\beta}\,\frac{\| \{A(t),B\} \| + \| \{ A(-t),B \} \|}{\e^{\pi\,t/\beta} - \e^{\pi\,t/\beta}}\,\diff t .
  \end{equation}
  Next, we apply the Lieb-Robinson-like bound from Lemma~\ref{lemma:long-range-lieb-robinson}, 
  \begin{align}\label{eq1B4}
    |I_{<\tau(l)}|    & \leq \nonumber \frac{2\,c_0}{\beta}\,\int_0^{\tau(l)} \frac{\e^{v\,t-l/t^{\gamma}}}{\e^{\pi\,t/\beta}-\e^{\pi\,t/\beta}}\,\diff t \\
                    & + \frac{2\,c_1}{\beta}\,\frac{1}{l^{\alpha}}\int_0^{\tau(l)} \frac{t^{\alpha\,(1+\gamma)}}{\e^{\pi\,t/\beta}-\e^{\pi\,t/\beta}}\,\diff t .
  \end{align}
  As $(\e^{\pi\,t/\beta}-\e^{\pi\,t/\beta})^{-1}\leq \frac{\beta}{2\,\pi\,t}$ we further have
  \begin{equation}\label{eq1B5}
      |I_{<\tau(l)}| \leq \frac{c_0}{\pi}\,\int_0^{\tau(l)} \frac{\e^{v\,t-l/t^{\gamma}}}{t}\,\diff t + \frac{c_1\,\tau(l)^{\alpha\,(1+\gamma)}}{\pi\,\alpha\,(1+\gamma)\,l^{\alpha}}.
  \end{equation}
  Now, let $g(t) \coloneqq \e^{v\,t}\,\e^{-l/t^{\gamma}}/t$.
  Notice that $g(t)$ is a product of the monotonically increasing function $\e^{vt}$ and the function $h(t) \coloneqq \e^{-l/t^{\gamma}}/t$ which satisfies: (i) it has a local maximum at $t_h^*(l) \coloneqq (\gamma\,l)^{1/\gamma}$; (ii) it is monotonically increasing in $[0,t_h^*(l)]$.
  Therefore, $g(t)$ is also monotonically increasing in $[0,t_h^*(l)]$ so that, provided that $\tau(l) < t_h^*(l)$, we can bound 
  \begin{equation}\label{eq1B7}
    \int_0^{\tau(l)} g(t)\,\diff t \leq g(\tau(l))\,\tau(l) = \e^{v\,\tau(l)-l/\tau(l)^{\gamma}}.
  \end{equation}
  For all $\tau(l) < t_h^*(l)$ we hence have the upper-bound
  \begin{equation}\label{eq1B8}
    |I_{<\tau(l)}| \leq \frac{c_0}{\pi}\,\e^{v\,\tau(l)-l/\tau(l)^{\gamma}} + \frac{c_1}{\pi}\,\frac{1}{\alpha\,(1+\gamma)}\,\frac{\tau(l)^{\alpha\,(1+\gamma)}}{l^{\alpha}} .
  \end{equation}

  It remains to find a good choice for $\tau(l)$.
  The function $\tau$ must grow unbounded with increasing $l$ in order for the right hand side of Eq.~\eqref{eq1B3} to go to zero and, at the same time, it must not grow too fast, so that $\tau(l) < t_h^*(l)$ is satisfied and the right hand side of Eq.~\eqref{eq1B8} goes to zero for large $l$.
  We take $\tau(l) = (l/v)^{\frac{1}{\gamma +1}}\,l^{-\eta}$ with $\eta \in ]0,1/(\gamma+1)[$.
  This yields that for all such $\eta$
  \begin{equation}
    \begin{split}
      & \| \expect{ A\,B }_\beta \| \leq \frac{c_0}{\pi}\,\e^{v^{\frac{\gamma}{\gamma +1}}\,l^{\frac{1}{\gamma +1}}\,(l^{-\eta}-l^{\gamma\,\eta})}\\
      &+ \frac{c_1/\pi}{\alpha\,(1+\gamma)\,v^{\alpha}}\,l^{-\eta\,(1+\gamma)\,\alpha} 
+ \frac{c_2/\pi}{\e^{\pi\,v^{\frac{-1}{\gamma +1}}\,l^{\frac{1}{\gamma +1}-\eta}/\beta}-1} .
    \end{split}
  \end{equation}
  As $\gamma$ and $\eta$ are positive and $\eta < 1/(\gamma+1)$, both the first and the last term decay super-algebraically for large $l$.
  The dominating term is thus the middle term, which implies the result as stated, where $\epsilon = 1-\eta\,(1+\gamma)$.
\end{proof}
We remark that we were not able to prove Theorem~\ref{theorem:power-law-deca-of-correlations} from the other Lieb-Robinson bounds for systems with long-range interactions.
In particular, when using the bound from \cite{Matsuta2016} that is valid for all $\alpha>D$, the term corresponding to the first term in Eq.~\eqref{eq1B5} diverges because of the behavior of the integrand in the limit $t \to 0$.
It remains open whether the restriction to $\alpha > 2\,D$ in our result is an artifact of our proof technique or whether there is a physical reason, at least the point $\alpha = 2\,D$ was identified to be a special case in \cite{Bachelard2013}.

\paragraph{Kitaev chain with long-range interactions.}

In the numerical part of this work we consider a generalization of the fermionic Kitaev chain \cite{Kitaev2001} with long-range p-wave pairing of size $L$, whose Hamiltonian $H \coloneqq H_{\mathrm{FR}} + H_{\mathrm{LR}}$ consists of a finite-range (nearest neighbor) part
\begin{equation} \label{eq:hamiltonianSR} H_{\mathrm{FR}} \coloneqq -t
  \sum_{i=1}^L \big(a_i^\dagger\,a_{i+1} + \text{h.c.} \big)
  - \mu \sum_{i=1}^L \big(n_i-1/2 \big)
\end{equation}
with tunneling rate $t$ and chemical potential $\mu$, and a power-law decaying long-range pair-creation/pair-annihilation term
\begin{equation} \label{eq:hamiltonianLR} H_{\mathrm{LR}} \coloneqq
  \frac{\Delta}{2} \sum_{i=1}^L \sum_{j=1}^{L-1}d_j^{-\alpha}\big(a_i\,a_{i+j} + a_{i+j}^\dagger\,a_i^\dagger \big) ,
\end{equation}
where $d_j \coloneqq \min(j,L-j)$, $\Delta$ is the coupling strength, and $\alpha$ the coupling exponent \cite{Vodola2014c}.
Whenever $L$ is finite, we consider a closed chain with anti-periodic boundary conditions, i.e., for $i>L$ we set $a_{i} \coloneqq -a_{i \mod L}$ as otherwise the long-range term vanishes due to the fermionic commutation relations \footnote{See Supplemental Material \ref*{appSRPBC} for a justification of that choice.}.
As in \cite{Vodola2014c}, in the remainder of this work, we consider the case $\Delta = 2\,t = 1$.
This model has a rich ground state phase diagram with two critical points at $\mu=\pm1$ \cite{Vodola2016b,Vodola2014c}.

The model described above falls into the class of so-called, quadratic, free, or non-interacting models.
Their Hamiltonians can be written as
$
H = \sum_{i,j} c_i^\dagger\,h_{ij}\,c_j
$
where $\vec{c}\coloneqq(a_1,a_1^\dagger, \dots, a_m,a_m^\dagger)$ and the Hamiltonian matrix $h$ is hermitian.
By diagonalizing $h = U^\dagger\,D\,U$ it can then be brought into the form
$
H = \sum_{i} b_i^\dagger\,D_{ii}\,b_i ,
$
with $\vec{b} \coloneqq U\,\vec{c}$.
From this normal-mode decomposition one can compute the elements $\corr(b_j, b_k^\dagger)_\beta$ of the covariance matrix of the thermal state and, finally, expectation values of the form $\corr(a_j, a_k^\dagger)_\beta$, which are just complex linear combinations of the $\corr(b_j, b_k^\dagger)_\beta$.

This allows one to calculate density-density correlations $\corr(n_j,n_k)$ via Wick's theorem.
It allows to express higher moments in terms of the second moments of the thermal states of quadratic Hamiltonians, which are Gaussian states.
Concretely, for fermionic systems we have (Lemma 6 in \cite{Gluza2016})
\begin{equation}\label{Wicksth}
  \expect{\prod _{k=1}^{m} c_{i_k}}_\beta =\text{Pf}(\Gamma[i_1,\dots, i_m]) ,
\end{equation}
where $\text{Pf}$ is the Pfaffian and $\Gamma$ has matrix elements
\begin{equation}
  \big( \Gamma[i_1,\dots,i_m] \big)_{a,b} \coloneqq \begin{cases} 
    \phantom{-}\expect{c_{i_a}\,c_{i_b}}_\beta & \text{if}\ a < b, \\
    -\expect{c_{i_b}\,c_{i_a}}_\beta & \text{if}\ a > b, \\
    \phantom{-}0 & \text{otherwise.}
  \end{cases}
\end{equation}
In particular, for the density-density correlations we find
\begin{align}\label{fourmoment}
  \hspace{-6pt}\corr_\beta(n_i,n_j) &= \expect{a_i^\dagger\,a_i\,a_j^\dagger\,a_j}_\beta - \expect{a_i^\dagger\,a_i}_\beta\,\expect{a_j^\dagger\,a_j}_\beta \\
                       &= \expect{a_i^\dagger\,a_j}_\beta\,\expect{ a_i\,a_j^\dagger}_\beta - \expect{a_i^\dagger\,a_j^\dagger}_\beta\,\expect{a_i\,a_j}_\beta . \label{fourmoment2}
\end{align}
This allows to bound density-density correlations, as well as higher order correlation functions between even and odd operators in quadratic models by means of Theorem~\ref{theorem:power-law-deca-of-correlations}.

\paragraph{Numerical analysis.}
We now present the numerical results on the decay of density-density correlations between two sites separated by a distance $l$ for different values of the chemical potentials $\mu$, inverse temperatures $\beta$ and interaction decay exponents $\alpha$.
We consider different chain lengths ($L\in\{500,1000,2000\}$) in order to identify the influence of finite size effects.
We observe that asymptotically correlations decay power-law like for any temperature and interaction strength \footnote{See Supplemental Material \ref*{powerlawdecay} for plots of the raw data.}, that is for all $i$ and large $l$
\begin{equation}
  \corr(n_i,n_{i+l}) \propto l^{-\nu},
\end{equation}
where $\nu$ characterizes the decay of the correlations.
Away from the critical point, we observe that $\nu$ depends on $\alpha$.
At the quantum critical point ($T=0$, $\mu=1$) we observe universal behavior with $\nu$ being independent of $\alpha$, namely $\nu \approx 2$. Everywhere else we find $\nu \approx 2$ when $\alpha \leq 1$ and $\nu \approx 2\,\alpha$ when $\alpha > 1$ (see Figure \ref{fig02}).
These results are in agreement with the results for the ground state in \cite{Vodola2014c}.

\begin{figure}
  \centering
  \includegraphics[width=0.9\linewidth]{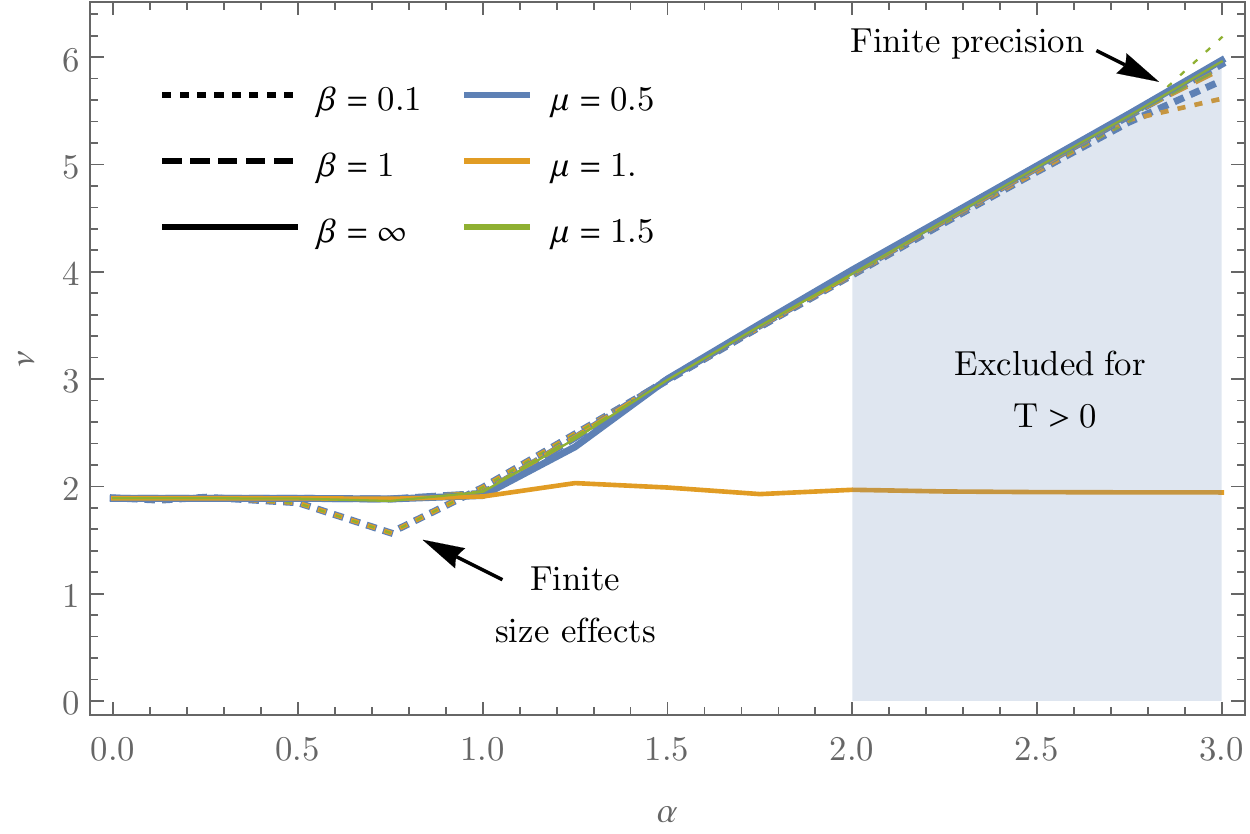}
  \caption{\label{fig02} Exponent $\nu$ as a function of the exponent of the interactions decay $\alpha$ extracted from the data for $L=2000$.
    The blue, orange, and green lines correspond to $\mu=0.5,1.0,1.5$.
    The line styles correspond to the inverse temperatures $\beta=0.1,1.0,\infty$.
    Exponents inside the shaded region are excluded by Theorem~\ref{theorem:power-law-deca-of-correlations} whenever $T>0$.
    For high temperatures and $\alpha < 1$ finite size effects slightly distort the results, for large $\alpha$ the finite precision is the limiting factor.
  }
\end{figure}

\paragraph{Application and discussion of the analytical bound.}
Let us now apply Theorem~\ref{theorem:power-law-deca-of-correlations} to the Kitaev chain.
As the model is quadratic, we can use Eq.~\eqref{fourmoment} to express the density-density correlations in terms of expectation values of odd operators and apply Theorem~\ref{theorem:power-law-deca-of-correlations}.
This yields for any $0<\epsilon<1$
\begin{align}
  \corr_\beta(n_i,n_j) \in \mathcal{O}(l^{-2(1-\epsilon)\,\alpha})
\end{align}
for any finite temperature $T>0$ and for any $\alpha > 2\,D$.

A comparison with the numerics shows that Theorem~\ref{theorem:power-law-deca-of-correlations} is asymptotically tight.
The shaded region in Fig.~\ref{fig02} is the range of decay exponents excluded by Theorem~\ref{theorem:power-law-deca-of-correlations}.
Despite the simplicity of the Kitaev chain, it shows correlations that are asymptotically as strong as possible for any fermionic system with power-law decaying two site interactions.
Further, the restriction to $T>0$ of Theorem~\ref{theorem:power-law-deca-of-correlations} is not an artifact of our proof strategy but correlations actually do decay slower at the quantum critical point at $T=0$ and $\mu=1$.

By performing a first-order high temperature expansion one can see that this model can be expected to essentially asymptotically saturate the bound from Theorem~\ref{theorem:power-law-deca-of-correlations} for $T \to \infty$.
For simplicity, consider only the long-range part of the Hamiltonian ($t=0$), then whenever correlations are analytic around $\beta=0$ (not the case for $\alpha<1$) one has in the limit $\beta\to0$
\begin{align}
  |\corr(a_1,a_j)_\beta| &= |\tr(a_1\,a_j\,\e^{-\beta\,H_{\mathrm{LR}}})/\tr(\e^{-\beta\,H_{\mathrm{LR}}})|\\
                        &\geq |\tr(a_1\,a_j\,\beta\,H_{\mathrm{LR}})/2^L - \mathcal{O}(\beta^2)| \\
                        &= |\beta\,\Delta\,d_{j-1}^{-\alpha}/4 - \mathcal{O}(\beta^2)| .
\end{align}
More generally, for an arbitrary system with local dimension $D$ and two-site interacting Hamiltonian $H \coloneqq \sum_{i,j} H_{i,j}$ and any two traceless on-site operators $A_i,B_j$ one finds that if there is an interval $[0,\beta_0]$ in which $\corr_\beta(A_i,B_j)$ is analytic, then for all $\beta \in [0,\beta_0]$
\begin{equation}
  |\corr(A_i,B_j)_\beta| \geq | \beta\,D^{-L} \tr(A_i\, B_j\, H_{i,j}) - \mathcal{O}(\beta^2) | .
\end{equation}
One expects such systems to have the strongest decay of correlations at high temperatures.
As they can essentially saturate our bound already for $\beta \approx 0$, our results indicate the absence of phase-transitions that are reflected in the asymptotic decay behavior of correlations at non-zero temperature in systems to which Theorem~\ref{theorem:power-law-deca-of-correlations} applies.

One might hope to prove a theorem similar to Theorem~\ref{theorem:power-law-deca-of-correlations} with tools from Fourier analysis \cite{Katznelson2004}.
As we discuss in \footnote{See Supplemental Material \ref*{fourieranalysis} for details of the Fourier analysis argument.}, in this way one can at most show that $\corr(a_i,a_{i+l}) \in \mathcal{O}(|l|^{-\alpha+1})$ for quadratic translation invariant Hamiltonians, which is a subclass of the systems to which Theorem~\ref{theorem:power-law-deca-of-correlations} applies.
Such a result would be weaker than Theorem~\ref{theorem:power-law-deca-of-correlations} and in particular overestimate the decay exponent, underlining the fact that our result is non-trivial.

\paragraph{Conclusions.}
We have investigated the correlation decay in systems of fermions with long-range interactions both analytically and numerically.
We have derived a general bound for the correlation decay between anti-commuting operators in systems with two site interacting Hamiltonians with power-law decaying interactions in thermal equilibrium at any $T>0$.
Our bound predicts that the correlations decay at least power-law like with essentially the same exponent as the decay of the interactions.
We have verified that our bound is asymptotically tight by a high temperature expansion and by comparing with numerical simulations of the Kitaev chain with long-range interactions and found that this model asymptotically exhibits the slowest possible decay of correlation of any fermionic model with two-site interacting and power-law decaying interactions.

\paragraph{Acknowledgements.}
We are grateful for insightful discussions with Zoltán Zimborás and Jens Eisert.
We acknowledge financial support from the European Research Council
(CoG QITBOX and AdG OSYRIS), the Axa Chair in Quantum Information Science,
Spanish MINECO (FOQUS FIS2013-46768, QIBEQI FIS2016-80773-P and Severo Ochoa Grant No.~SEV-2015-0522), Fundaci\'{o} Privada Cellex, and Generalitat de Catalunya (Grant No.~SGR 874 and 875, and CERCA Programme).
C.\ G.~acknowledges support by MPQ-ICFO, ICFOnest+ (FP7-PEOPLE-2013-COFUND), and co-funding by the European Union's Marie Skłodowska-Curie Individual Fellowships (IF-EF) programme under GA: 700140.
S.\ H.-S.~acknowledges funding from the ``laCaixa''-Severo Ochoa program and the Max Planck Prince of Asturias Award Mobility Programme.


\begin{thebibliography}{47}%
\makeatletter
\providecommand \@ifxundefined [1]{%
 \@ifx{#1\undefined}
}%
\providecommand \@ifnum [1]{%
 \ifnum #1\expandafter \@firstoftwo
 \else \expandafter \@secondoftwo
 \fi
}%
\providecommand \@ifx [1]{%
 \ifx #1\expandafter \@firstoftwo
 \else \expandafter \@secondoftwo
 \fi
}%
\providecommand \natexlab [1]{#1}%
\providecommand \enquote  [1]{``#1''}%
\providecommand \bibnamefont  [1]{#1}%
\providecommand \bibfnamefont [1]{#1}%
\providecommand \citenamefont [1]{#1}%
\providecommand \href@noop [0]{\@secondoftwo}%
\providecommand \href [0]{\begingroup \@sanitize@url \@href}%
\providecommand \@href[1]{\@@startlink{#1}\@@href}%
\providecommand \@@href[1]{\endgroup#1\@@endlink}%
\providecommand \@sanitize@url [0]{\catcode `\\12\catcode `\$12\catcode
  `\&12\catcode `\#12\catcode `\^12\catcode `\_12\catcode `\%12\relax}%
\providecommand \@@startlink[1]{}%
\providecommand \@@endlink[0]{}%
\providecommand \url  [0]{\begingroup\@sanitize@url \@url }%
\providecommand \@url [1]{\endgroup\@href {#1}{\urlprefix }}%
\providecommand \urlprefix  [0]{URL }%
\providecommand \Eprint [0]{\href }%
\providecommand \doibase [0]{http://dx.doi.org/}%
\providecommand \selectlanguage [0]{\@gobble}%
\providecommand \bibinfo  [0]{\@secondoftwo}%
\providecommand \bibfield  [0]{\@secondoftwo}%
\providecommand \translation [1]{[#1]}%
\providecommand \BibitemOpen [0]{}%
\providecommand \bibitemStop [0]{}%
\providecommand \bibitemNoStop [0]{.\EOS\space}%
\providecommand \EOS [0]{\spacefactor3000\relax}%
\providecommand \BibitemShut  [1]{\csname bibitem#1\endcsname}%
\let\auto@bib@innerbib\@empty
\bibitem [{\citenamefont {Kastner}(2011)}]{Kastner2011b}%
  \BibitemOpen
  \bibfield  {author} {\bibinfo {author} {\bibfnamefont {M.}~\bibnamefont
  {Kastner}},\ }\href {\doibase 10.1103/PhysRevLett.106.130601} {\bibfield
  {journal} {\bibinfo  {journal} {Phys. Rev. Lett.}\ }\textbf {\bibinfo
  {volume} {106}},\ \bibinfo {pages} {130601} (\bibinfo {year} {2011})},\
  \Eprint {http://arxiv.org/abs/1103.0836} {arXiv:1103.0836} \BibitemShut
  {NoStop}%
\bibitem [{\citenamefont {Bachelard}\ and\ \citenamefont
  {Kastner}(2013)}]{Bachelard2013}%
  \BibitemOpen
  \bibfield  {author} {\bibinfo {author} {\bibfnamefont {R.}~\bibnamefont
  {Bachelard}}\ and\ \bibinfo {author} {\bibfnamefont {M.}~\bibnamefont
  {Kastner}},\ }\href {\doibase 10.1103/PhysRevLett.110.170603} {\bibfield
  {journal} {\bibinfo  {journal} {Phys. Rev. Lett.}\ }\textbf {\bibinfo
  {volume} {110}},\ \bibinfo {pages} {170603} (\bibinfo {year} {2013})},\
  \Eprint {http://arxiv.org/abs/1304.2922} {arXiv:1304.2922} \BibitemShut
  {NoStop}%
\bibitem [{\citenamefont {Kastner}(2016)}]{Kastner2016}%
  \BibitemOpen
  \bibfield  {author} {\bibinfo {author} {\bibfnamefont {M.}~\bibnamefont
  {Kastner}},\ }\href {http://arxiv.org/abs/1610.03770} {\  (\bibinfo {year}
  {2016})},\ \Eprint {http://arxiv.org/abs/1610.03770} {arXiv:1610.03770}
  \BibitemShut {NoStop}%
\bibitem [{\citenamefont {Richerme}\ \emph {et~al.}(2014)\citenamefont
  {Richerme}, \citenamefont {Gong}, \citenamefont {Lee}, \citenamefont {Senko},
  \citenamefont {Smith}, \citenamefont {Foss-Feig}, \citenamefont {Michalakis},
  \citenamefont {Gorshkov},\ and\ \citenamefont {Monroe}}]{Richerme2014}%
  \BibitemOpen
  \bibfield  {author} {\bibinfo {author} {\bibfnamefont {P.}~\bibnamefont
  {Richerme}}, \bibinfo {author} {\bibfnamefont {Z.-X.}\ \bibnamefont {Gong}},
  \bibinfo {author} {\bibfnamefont {A.}~\bibnamefont {Lee}}, \bibinfo {author}
  {\bibfnamefont {C.}~\bibnamefont {Senko}}, \bibinfo {author} {\bibfnamefont
  {J.}~\bibnamefont {Smith}}, \bibinfo {author} {\bibfnamefont
  {M.}~\bibnamefont {Foss-Feig}}, \bibinfo {author} {\bibfnamefont
  {S.}~\bibnamefont {Michalakis}}, \bibinfo {author} {\bibfnamefont {A.~V.}\
  \bibnamefont {Gorshkov}}, \ and\ \bibinfo {author} {\bibfnamefont
  {C.}~\bibnamefont {Monroe}},\ }\href {\doibase 10.1038/nature13450}
  {\bibfield  {journal} {\bibinfo  {journal} {Nature}\ }\textbf {\bibinfo
  {volume} {511}},\ \bibinfo {pages} {198} (\bibinfo {year} {2014})},\ \Eprint
  {http://arxiv.org/abs/1401.5088} {arXiv:1401.5088} \BibitemShut {NoStop}%
\bibitem [{\citenamefont {Maghrebi}\ \emph {et~al.}(2016)\citenamefont
  {Maghrebi}, \citenamefont {Gong}, \citenamefont {Foss-Feig},\ and\
  \citenamefont {Gorshkov}}]{Maghrebi2015a}%
  \BibitemOpen
  \bibfield  {author} {\bibinfo {author} {\bibfnamefont {M.~F.}\ \bibnamefont
  {Maghrebi}}, \bibinfo {author} {\bibfnamefont {Z.-X.}\ \bibnamefont {Gong}},
  \bibinfo {author} {\bibfnamefont {M.}~\bibnamefont {Foss-Feig}}, \ and\
  \bibinfo {author} {\bibfnamefont {A.~V.}\ \bibnamefont {Gorshkov}},\ }\href
  {\doibase 10.1103/PhysRevB.93.125128} {\bibfield  {journal} {\bibinfo
  {journal} {Phys. Rev. B}\ }\textbf {\bibinfo {volume} {93}},\ \bibinfo
  {pages} {125128} (\bibinfo {year} {2016})},\ \Eprint
  {http://arxiv.org/abs/1508.00906} {arXiv:1508.00906} \BibitemShut {NoStop}%
\bibitem [{\citenamefont {Koffel}\ \emph {et~al.}(2012)\citenamefont {Koffel},
  \citenamefont {Lewenstein},\ and\ \citenamefont {Tagliacozzo}}]{Koffel2012a}%
  \BibitemOpen
  \bibfield  {author} {\bibinfo {author} {\bibfnamefont {T.}~\bibnamefont
  {Koffel}}, \bibinfo {author} {\bibfnamefont {M.}~\bibnamefont {Lewenstein}},
  \ and\ \bibinfo {author} {\bibfnamefont {L.}~\bibnamefont {Tagliacozzo}},\
  }\href {\doibase 10.1103/PhysRevLett.109.267203} {\bibfield  {journal}
  {\bibinfo  {journal} {Phys. Rev. Lett.}\ }\textbf {\bibinfo {volume} {109}},\
  \bibinfo {pages} {267203} (\bibinfo {year} {2012})},\ \Eprint
  {http://arxiv.org/abs/1207.3957} {arXiv:1207.3957} \BibitemShut {NoStop}%
\bibitem [{\citenamefont {Eldredge}\ \emph {et~al.}(2016)\citenamefont
  {Eldredge}, \citenamefont {Gong}, \citenamefont {Moosavian}, \citenamefont
  {Foss-Feig},\ and\ \citenamefont {Gorshkov}}]{Eldredge2016}%
  \BibitemOpen
  \bibfield  {author} {\bibinfo {author} {\bibfnamefont {Z.}~\bibnamefont
  {Eldredge}}, \bibinfo {author} {\bibfnamefont {Z.-X.}\ \bibnamefont {Gong}},
  \bibinfo {author} {\bibfnamefont {A.~H.}\ \bibnamefont {Moosavian}}, \bibinfo
  {author} {\bibfnamefont {M.}~\bibnamefont {Foss-Feig}}, \ and\ \bibinfo
  {author} {\bibfnamefont {A.~V.}\ \bibnamefont {Gorshkov}},\ }\href
  {http://arxiv.org/abs/1612.02442} {\  (\bibinfo {year} {2016})},\ \Eprint
  {http://arxiv.org/abs/1612.02442} {arXiv:1612.02442} \BibitemShut {NoStop}%
\bibitem [{\citenamefont {Vodola}\ \emph {et~al.}(2014)\citenamefont {Vodola},
  \citenamefont {Lepori}, \citenamefont {Ercolessi}, \citenamefont {Gorshkov},\
  and\ \citenamefont {Pupillo}}]{Vodola2014c}%
  \BibitemOpen
  \bibfield  {author} {\bibinfo {author} {\bibfnamefont {D.}~\bibnamefont
  {Vodola}}, \bibinfo {author} {\bibfnamefont {L.}~\bibnamefont {Lepori}},
  \bibinfo {author} {\bibfnamefont {E.}~\bibnamefont {Ercolessi}}, \bibinfo
  {author} {\bibfnamefont {A.~V.}\ \bibnamefont {Gorshkov}}, \ and\ \bibinfo
  {author} {\bibfnamefont {G.}~\bibnamefont {Pupillo}},\ }\href {\doibase
  10.1103/PhysRevLett.113.156402} {\bibfield  {journal} {\bibinfo  {journal}
  {Phys. Rev. Lett.}\ }\textbf {\bibinfo {volume} {113}},\ \bibinfo {pages}
  {156402} (\bibinfo {year} {2014})},\ \Eprint {http://arxiv.org/abs/1405.5440}
  {arXiv:1405.5440} \BibitemShut {NoStop}%
\bibitem [{\citenamefont {Patrick}\ \emph {et~al.}(2016)\citenamefont
  {Patrick}, \citenamefont {Neupert},\ and\ \citenamefont
  {Pachos}}]{Patrick2016}%
  \BibitemOpen
  \bibfield  {author} {\bibinfo {author} {\bibfnamefont {K.}~\bibnamefont
  {Patrick}}, \bibinfo {author} {\bibfnamefont {T.}~\bibnamefont {Neupert}}, \
  and\ \bibinfo {author} {\bibfnamefont {J.~K.}\ \bibnamefont {Pachos}},\
  }\href {http://arxiv.org/abs/1611.00796} {\  (\bibinfo {year} {2016})},\
  \Eprint {http://arxiv.org/abs/1611.00796} {arXiv:1611.00796} \BibitemShut
  {NoStop}%
\bibitem [{\citenamefont {Porras}\ and\ \citenamefont
  {Cirac}(2004)}]{Porras2004}%
  \BibitemOpen
  \bibfield  {author} {\bibinfo {author} {\bibfnamefont {D.}~\bibnamefont
  {Porras}}\ and\ \bibinfo {author} {\bibfnamefont {J.~I.}\ \bibnamefont
  {Cirac}},\ }\href {\doibase 10.1103/PhysRevLett.92.207901} {\bibfield
  {journal} {\bibinfo  {journal} {Phys. Rev. Lett.}\ }\textbf {\bibinfo
  {volume} {92}},\ \bibinfo {pages} {207901} (\bibinfo {year} {2004})},\
  \Eprint {http://arxiv.org/abs/quant-ph/0401102} {arXiv:quant-ph/0401102}
  \BibitemShut {NoStop}%
\bibitem [{\citenamefont {Deng}\ \emph {et~al.}(2005)\citenamefont {Deng},
  \citenamefont {Porras},\ and\ \citenamefont {Cirac}}]{Deng2005}%
  \BibitemOpen
  \bibfield  {author} {\bibinfo {author} {\bibfnamefont {X.~L.}\ \bibnamefont
  {Deng}}, \bibinfo {author} {\bibfnamefont {D.}~\bibnamefont {Porras}}, \ and\
  \bibinfo {author} {\bibfnamefont {J.~I.}\ \bibnamefont {Cirac}},\ }\href
  {\doibase 10.1103/PhysRevA.72.063407} {\bibfield  {journal} {\bibinfo
  {journal} {Phys. Rev. A}\ }\textbf {\bibinfo {volume} {72}},\ \bibinfo
  {pages} {063407} (\bibinfo {year} {2005})},\ \Eprint
  {http://arxiv.org/abs/quant-ph/0509197} {arXiv:quant-ph/0509197} \BibitemShut
  {NoStop}%
\bibitem [{\citenamefont {Hauke}\ \emph {et~al.}(2010)\citenamefont {Hauke},
  \citenamefont {Cucchietti}, \citenamefont {M{\"{u}}ller-Hermes},
  \citenamefont {Ba{\~{n}}uls}, \citenamefont {Cirac},\ and\ \citenamefont
  {Lewenstein}}]{Hauke2010}%
  \BibitemOpen
  \bibfield  {author} {\bibinfo {author} {\bibfnamefont {P.}~\bibnamefont
  {Hauke}}, \bibinfo {author} {\bibfnamefont {F.~M.}\ \bibnamefont
  {Cucchietti}}, \bibinfo {author} {\bibfnamefont {A.}~\bibnamefont
  {M{\"{u}}ller-Hermes}}, \bibinfo {author} {\bibfnamefont {M.~C.}\
  \bibnamefont {Ba{\~{n}}uls}}, \bibinfo {author} {\bibfnamefont {J.~I.}\
  \bibnamefont {Cirac}}, \ and\ \bibinfo {author} {\bibfnamefont
  {M.}~\bibnamefont {Lewenstein}},\ }\href {\doibase
  10.1088/1367-2630/12/11/113037} {\bibfield  {journal} {\bibinfo  {journal}
  {New Journal of Physics}\ }\textbf {\bibinfo {volume} {12}},\ \bibinfo
  {pages} {113037} (\bibinfo {year} {2010})},\ \Eprint
  {http://arxiv.org/abs/1008.2945} {arXiv:1008.2945} \BibitemShut {NoStop}%
\bibitem [{\citenamefont {Gonz{\'{a}}lez-Tudela}\ \emph
  {et~al.}(2015)\citenamefont {Gonz{\'{a}}lez-Tudela}, \citenamefont {Hung},
  \citenamefont {Chang}, \citenamefont {Cirac},\ and\ \citenamefont
  {Kimble}}]{Gullans2015}%
  \BibitemOpen
  \bibfield  {author} {\bibinfo {author} {\bibfnamefont {A.}~\bibnamefont
  {Gonz{\'{a}}lez-Tudela}}, \bibinfo {author} {\bibfnamefont {C.-L.}\
  \bibnamefont {Hung}}, \bibinfo {author} {\bibfnamefont {D.~E.}\ \bibnamefont
  {Chang}}, \bibinfo {author} {\bibfnamefont {J.~I.}\ \bibnamefont {Cirac}}, \
  and\ \bibinfo {author} {\bibfnamefont {H.~J.}\ \bibnamefont {Kimble}},\
  }\href {\doibase 10.1038/nphoton.2015.54} {\bibfield  {journal} {\bibinfo
  {journal} {Nature Photonics}\ }\textbf {\bibinfo {volume} {9}},\ \bibinfo
  {pages} {320} (\bibinfo {year} {2015})},\ \Eprint
  {http://arxiv.org/abs/1208.6293} {arXiv:1208.6293} \BibitemShut {NoStop}%
\bibitem [{\citenamefont {Micheli}\ \emph {et~al.}(2006)\citenamefont
  {Micheli}, \citenamefont {Brennen},\ and\ \citenamefont
  {Zoller}}]{Micheli2006}%
  \BibitemOpen
  \bibfield  {author} {\bibinfo {author} {\bibfnamefont {A.}~\bibnamefont
  {Micheli}}, \bibinfo {author} {\bibfnamefont {G.~K.}\ \bibnamefont
  {Brennen}}, \ and\ \bibinfo {author} {\bibfnamefont {P.}~\bibnamefont
  {Zoller}},\ }\href {\doibase 10.1038/nphys287} {\bibfield  {journal}
  {\bibinfo  {journal} {Nature Physics}\ }\textbf {\bibinfo {volume} {2}},\
  \bibinfo {pages} {341} (\bibinfo {year} {2006})},\ \Eprint
  {http://arxiv.org/abs/quant-ph/0512222} {arXiv:quant-ph/0512222} \BibitemShut
  {NoStop}%
\bibitem [{\citenamefont {Islam}\ \emph {et~al.}(2011)\citenamefont {Islam},
  \citenamefont {Edwards}, \citenamefont {Kim}, \citenamefont {Korenblit},
  \citenamefont {Noh}, \citenamefont {Carmichael}, \citenamefont {Lin},
  \citenamefont {Duan}, \citenamefont {Wang}, \citenamefont {Freericks},\ and\
  \citenamefont {Monroe}}]{Islam2011}%
  \BibitemOpen
  \bibfield  {author} {\bibinfo {author} {\bibfnamefont {R.}~\bibnamefont
  {Islam}}, \bibinfo {author} {\bibfnamefont {E.~E.}\ \bibnamefont {Edwards}},
  \bibinfo {author} {\bibfnamefont {K.}~\bibnamefont {Kim}}, \bibinfo {author}
  {\bibfnamefont {S.}~\bibnamefont {Korenblit}}, \bibinfo {author}
  {\bibfnamefont {C.}~\bibnamefont {Noh}}, \bibinfo {author} {\bibfnamefont
  {H.}~\bibnamefont {Carmichael}}, \bibinfo {author} {\bibfnamefont {G.-D.}\
  \bibnamefont {Lin}}, \bibinfo {author} {\bibfnamefont {L.-M.}\ \bibnamefont
  {Duan}}, \bibinfo {author} {\bibfnamefont {C.-C.~J.}\ \bibnamefont {Wang}},
  \bibinfo {author} {\bibfnamefont {J.~K.}\ \bibnamefont {Freericks}}, \ and\
  \bibinfo {author} {\bibfnamefont {C.}~\bibnamefont {Monroe}},\ }\href
  {\doibase 10.1038/ncomms1374} {\bibfield  {journal} {\bibinfo  {journal}
  {Nature Communications}\ }\textbf {\bibinfo {volume} {2}},\ \bibinfo {pages}
  {377} (\bibinfo {year} {2011})},\ \Eprint {http://arxiv.org/abs/1103.2400}
  {arXiv:1103.2400} \BibitemShut {NoStop}%
\bibitem [{\citenamefont {Schneider}\ \emph {et~al.}(2012)\citenamefont
  {Schneider}, \citenamefont {Porras},\ and\ \citenamefont
  {Schaetz}}]{Schneide2012}%
  \BibitemOpen
  \bibfield  {author} {\bibinfo {author} {\bibfnamefont {C.}~\bibnamefont
  {Schneider}}, \bibinfo {author} {\bibfnamefont {D.}~\bibnamefont {Porras}}, \
  and\ \bibinfo {author} {\bibfnamefont {T.}~\bibnamefont {Schaetz}},\ }\href
  {\doibase 10.1088/0034-4885/75/2/024401} {\bibfield  {journal} {\bibinfo
  {journal} {Reports on Progress in Physics}\ }\textbf {\bibinfo {volume}
  {75}},\ \bibinfo {pages} {024401} (\bibinfo {year} {2012})}\BibitemShut
  {NoStop}%
\bibitem [{\citenamefont {Britton}\ \emph {et~al.}(2012)\citenamefont
  {Britton}, \citenamefont {Sawyer}, \citenamefont {Keith}, \citenamefont
  {Wang}, \citenamefont {Freericks}, \citenamefont {Uys}, \citenamefont
  {Biercuk},\ and\ \citenamefont {Bollinger}}]{Britton2012}%
  \BibitemOpen
  \bibfield  {author} {\bibinfo {author} {\bibfnamefont {J.~W.}\ \bibnamefont
  {Britton}}, \bibinfo {author} {\bibfnamefont {B.~C.}\ \bibnamefont {Sawyer}},
  \bibinfo {author} {\bibfnamefont {A.~C.}\ \bibnamefont {Keith}}, \bibinfo
  {author} {\bibfnamefont {C.-C.~J.}\ \bibnamefont {Wang}}, \bibinfo {author}
  {\bibfnamefont {J.~K.}\ \bibnamefont {Freericks}}, \bibinfo {author}
  {\bibfnamefont {H.}~\bibnamefont {Uys}}, \bibinfo {author} {\bibfnamefont
  {M.~J.}\ \bibnamefont {Biercuk}}, \ and\ \bibinfo {author} {\bibfnamefont
  {J.~J.}\ \bibnamefont {Bollinger}},\ }\href {\doibase 10.1038/nature10981}
  {\bibfield  {journal} {\bibinfo  {journal} {Nature}\ }\textbf {\bibinfo
  {volume} {484}},\ \bibinfo {pages} {489} (\bibinfo {year} {2012})},\ \Eprint
  {http://arxiv.org/abs/1204.5789} {arXiv:1204.5789} \BibitemShut {NoStop}%
\bibitem [{\citenamefont {Jurcevic}\ \emph {et~al.}(2014)\citenamefont
  {Jurcevic}, \citenamefont {Lanyon}, \citenamefont {Hauke}, \citenamefont
  {Hempel}, \citenamefont {Zoller}, \citenamefont {Blatt},\ and\ \citenamefont
  {Roos}}]{Jurcevic2014}%
  \BibitemOpen
  \bibfield  {author} {\bibinfo {author} {\bibfnamefont {P.}~\bibnamefont
  {Jurcevic}}, \bibinfo {author} {\bibfnamefont {B.~P.}\ \bibnamefont
  {Lanyon}}, \bibinfo {author} {\bibfnamefont {P.}~\bibnamefont {Hauke}},
  \bibinfo {author} {\bibfnamefont {C.}~\bibnamefont {Hempel}}, \bibinfo
  {author} {\bibfnamefont {P.}~\bibnamefont {Zoller}}, \bibinfo {author}
  {\bibfnamefont {R.}~\bibnamefont {Blatt}}, \ and\ \bibinfo {author}
  {\bibfnamefont {C.~F.}\ \bibnamefont {Roos}},\ }\href {\doibase
  10.1038/nature13461} {\bibfield  {journal} {\bibinfo  {journal} {Nature}\
  }\textbf {\bibinfo {volume} {511}},\ \bibinfo {pages} {202} (\bibinfo {year}
  {2014})},\ \Eprint {http://arxiv.org/abs/1401.5387} {arXiv:1401.5387}
  \BibitemShut {NoStop}%
\bibitem [{\citenamefont {Labuhn}\ \emph {et~al.}(2016)\citenamefont {Labuhn},
  \citenamefont {Barredo}, \citenamefont {Ravets}, \citenamefont
  {de~L{\'{e}}s{\'{e}}leuc}, \citenamefont {Macr{\`{i}}}, \citenamefont
  {Lahaye},\ and\ \citenamefont {Browaeys}}]{Labuhn2016}%
  \BibitemOpen
  \bibfield  {author} {\bibinfo {author} {\bibfnamefont {H.}~\bibnamefont
  {Labuhn}}, \bibinfo {author} {\bibfnamefont {D.}~\bibnamefont {Barredo}},
  \bibinfo {author} {\bibfnamefont {S.}~\bibnamefont {Ravets}}, \bibinfo
  {author} {\bibfnamefont {S.}~\bibnamefont {de~L{\'{e}}s{\'{e}}leuc}},
  \bibinfo {author} {\bibfnamefont {T.}~\bibnamefont {Macr{\`{i}}}}, \bibinfo
  {author} {\bibfnamefont {T.}~\bibnamefont {Lahaye}}, \ and\ \bibinfo {author}
  {\bibfnamefont {A.}~\bibnamefont {Browaeys}},\ }\href {\doibase
  10.1038/nature18274} {\bibfield  {journal} {\bibinfo  {journal} {Nature}\
  }\textbf {\bibinfo {volume} {534}},\ \bibinfo {pages} {667} (\bibinfo {year}
  {2016})},\ \Eprint {http://arxiv.org/abs/1509.04543} {arXiv:1509.04543}
  \BibitemShut {NoStop}%
\bibitem [{\citenamefont {Hastings}(2004{\natexlab{a}})}]{Hastings2004a}%
  \BibitemOpen
  \bibfield  {author} {\bibinfo {author} {\bibfnamefont {M.~B.}\ \bibnamefont
  {Hastings}},\ }\href {\doibase 10.1103/PhysRevLett.93.126402} {\bibfield
  {journal} {\bibinfo  {journal} {Phys. Rev. Lett.}\ }\textbf {\bibinfo
  {volume} {93}},\ \bibinfo {pages} {126402} (\bibinfo {year}
  {2004}{\natexlab{a}})},\ \Eprint {http://arxiv.org/abs/cond-mat/0406348}
  {arXiv:cond-mat/0406348} \BibitemShut {NoStop}%
\bibitem [{\citenamefont {Hastings}\ and\ \citenamefont
  {Koma}(2006)}]{Hastings2006}%
  \BibitemOpen
  \bibfield  {author} {\bibinfo {author} {\bibfnamefont {M.~B.}\ \bibnamefont
  {Hastings}}\ and\ \bibinfo {author} {\bibfnamefont {T.}~\bibnamefont
  {Koma}},\ }\href {\doibase 10.1007/s00220-006-0030-4} {\bibfield  {journal}
  {\bibinfo  {journal} {Communications in Mathematical Physics}\ }\textbf
  {\bibinfo {volume} {265}},\ \bibinfo {pages} {781} (\bibinfo {year}
  {2006})},\ \Eprint {http://arxiv.org/abs/math-ph/0507008}
  {arXiv:math-ph/0507008} \BibitemShut {NoStop}%
\bibitem [{\citenamefont {Kliesch}\ \emph
  {et~al.}(2014{\natexlab{a}})\citenamefont {Kliesch}, \citenamefont {Gogolin},
  \citenamefont {Kastoryano}, \citenamefont {Riera},\ and\ \citenamefont
  {Eisert}}]{Kliesch2014}%
  \BibitemOpen
  \bibfield  {author} {\bibinfo {author} {\bibfnamefont {M.}~\bibnamefont
  {Kliesch}}, \bibinfo {author} {\bibfnamefont {C.}~\bibnamefont {Gogolin}},
  \bibinfo {author} {\bibfnamefont {M.~J.}\ \bibnamefont {Kastoryano}},
  \bibinfo {author} {\bibfnamefont {A.}~\bibnamefont {Riera}}, \ and\ \bibinfo
  {author} {\bibfnamefont {J.}~\bibnamefont {Eisert}},\ }\href {\doibase
  10.1103/PhysRevX.4.031019} {\bibfield  {journal} {\bibinfo  {journal} {Phys.
  Rev. X}\ }\textbf {\bibinfo {volume} {4}},\ \bibinfo {pages} {31019}
  (\bibinfo {year} {2014}{\natexlab{a}})},\ \Eprint
  {http://arxiv.org/abs/1309.0816} {arXiv:1309.0816} \BibitemShut {NoStop}%
\bibitem [{\citenamefont {Cannas}\ and\ \citenamefont
  {Tamarit}(1996)}]{Cannas1996}%
  \BibitemOpen
  \bibfield  {author} {\bibinfo {author} {\bibfnamefont {S.~A.}\ \bibnamefont
  {Cannas}}\ and\ \bibinfo {author} {\bibfnamefont {F.~A.}\ \bibnamefont
  {Tamarit}},\ }\href {\doibase 10.1103/PhysRevB.54.R12661} {\bibfield
  {journal} {\bibinfo  {journal} {Phys. Rev. B}\ }\textbf {\bibinfo {volume}
  {54}},\ \bibinfo {pages} {R12661} (\bibinfo {year} {1996})},\ \Eprint
  {http://arxiv.org/abs/cond-mat/9607210} {arXiv:cond-mat/9607210} \BibitemShut
  {NoStop}%
\bibitem [{\citenamefont {Dyson}(1969)}]{Dyson1969c}%
  \BibitemOpen
  \bibfield  {author} {\bibinfo {author} {\bibfnamefont {F.~J.}\ \bibnamefont
  {Dyson}},\ }\href {\doibase 10.1007/BF01645907} {\bibfield  {journal}
  {\bibinfo  {journal} {Communications in Mathematical Physics}\ }\textbf
  {\bibinfo {volume} {12}},\ \bibinfo {pages} {91} (\bibinfo {year}
  {1969})}\BibitemShut {NoStop}%
\bibitem [{\citenamefont {Fisher}\ \emph {et~al.}(1972)\citenamefont {Fisher},
  \citenamefont {Ma},\ and\ \citenamefont {Nickel}}]{Fischer1972}%
  \BibitemOpen
  \bibfield  {author} {\bibinfo {author} {\bibfnamefont {M.~E.}\ \bibnamefont
  {Fisher}}, \bibinfo {author} {\bibfnamefont {S.-k.}\ \bibnamefont {Ma}}, \
  and\ \bibinfo {author} {\bibfnamefont {B.~G.}\ \bibnamefont {Nickel}},\
  }\href {\doibase 10.1103/PhysRevLett.29.917} {\bibfield  {journal} {\bibinfo
  {journal} {Phys. Rev. Lett.}\ }\textbf {\bibinfo {volume} {29}},\ \bibinfo
  {pages} {917} (\bibinfo {year} {1972})}\BibitemShut {NoStop}%
\bibitem [{\citenamefont {Dutta}\ and\ \citenamefont
  {Bhattacharjee}(2001)}]{Dutta2001}%
  \BibitemOpen
  \bibfield  {author} {\bibinfo {author} {\bibfnamefont {A.}~\bibnamefont
  {Dutta}}\ and\ \bibinfo {author} {\bibfnamefont {J.}~\bibnamefont
  {Bhattacharjee}},\ }\href {\doibase 10.1103/PhysRevB.64.184106} {\bibfield
  {journal} {\bibinfo  {journal} {Phys. Rev. B}\ }\textbf {\bibinfo {volume}
  {64}},\ \bibinfo {pages} {184106} (\bibinfo {year} {2001})}\BibitemShut
  {NoStop}%
\bibitem [{\citenamefont {Dalmonte}\ \emph {et~al.}(2010)\citenamefont
  {Dalmonte}, \citenamefont {Pupillo},\ and\ \citenamefont
  {Zoller}}]{Dalmonte2010}%
  \BibitemOpen
  \bibfield  {author} {\bibinfo {author} {\bibfnamefont {M.}~\bibnamefont
  {Dalmonte}}, \bibinfo {author} {\bibfnamefont {G.}~\bibnamefont {Pupillo}}, \
  and\ \bibinfo {author} {\bibfnamefont {P.}~\bibnamefont {Zoller}},\ }\href
  {\doibase 10.1103/PhysRevLett.105.140401} {\bibfield  {journal} {\bibinfo
  {journal} {Phys. Rev. Lett.}\ }\textbf {\bibinfo {volume} {105}},\ \bibinfo
  {pages} {140401} (\bibinfo {year} {2010})},\ \Eprint
  {http://arxiv.org/abs/1004.5035} {arXiv:1004.5035} \BibitemShut {NoStop}%
\bibitem [{\citenamefont {Peter}\ \emph {et~al.}(2012)\citenamefont {Peter},
  \citenamefont {M{\"{u}}ller}, \citenamefont {Wessel},\ and\ \citenamefont
  {B{\"{u}}chler}}]{Peter2012}%
  \BibitemOpen
  \bibfield  {author} {\bibinfo {author} {\bibfnamefont {D.}~\bibnamefont
  {Peter}}, \bibinfo {author} {\bibfnamefont {S.}~\bibnamefont {M{\"{u}}ller}},
  \bibinfo {author} {\bibfnamefont {S.}~\bibnamefont {Wessel}}, \ and\ \bibinfo
  {author} {\bibfnamefont {H.~P.}\ \bibnamefont {B{\"{u}}chler}},\ }\href
  {\doibase 10.1103/PhysRevLett.109.025303} {\bibfield  {journal} {\bibinfo
  {journal} {Phys. Rev. Lett.}\ }\textbf {\bibinfo {volume} {109}},\ \bibinfo
  {pages} {025303} (\bibinfo {year} {2012})},\ \Eprint
  {http://arxiv.org/abs/1203.1624v1} {arXiv:1203.1624v1} \BibitemShut {NoStop}%
\bibitem [{\citenamefont {Santos}\ \emph {et~al.}(2016)\citenamefont {Santos},
  \citenamefont {Borgonovi},\ and\ \citenamefont {Celardo}}]{Santos2015}%
  \BibitemOpen
  \bibfield  {author} {\bibinfo {author} {\bibfnamefont {L.~F.}\ \bibnamefont
  {Santos}}, \bibinfo {author} {\bibfnamefont {F.}~\bibnamefont {Borgonovi}}, \
  and\ \bibinfo {author} {\bibfnamefont {G.~L.}\ \bibnamefont {Celardo}},\
  }\href {\doibase 10.1103/PhysRevLett.116.250402} {\bibfield  {journal}
  {\bibinfo  {journal} {Phys. Rev. Lett.}\ }\textbf {\bibinfo {volume} {116}},\
  \bibinfo {pages} {250402} (\bibinfo {year} {2016})},\ \Eprint
  {http://arxiv.org/abs/1507.06649} {arXiv:1507.06649} \BibitemShut {NoStop}%
\bibitem [{\citenamefont {Foss-Feig}\ \emph {et~al.}(2015)\citenamefont
  {Foss-Feig}, \citenamefont {Gong}, \citenamefont {Clark},\ and\ \citenamefont
  {Gorshkov}}]{Foss-Feig2014}%
  \BibitemOpen
  \bibfield  {author} {\bibinfo {author} {\bibfnamefont {M.}~\bibnamefont
  {Foss-Feig}}, \bibinfo {author} {\bibfnamefont {Z.-X.}\ \bibnamefont {Gong}},
  \bibinfo {author} {\bibfnamefont {C.~W.}\ \bibnamefont {Clark}}, \ and\
  \bibinfo {author} {\bibfnamefont {A.~V.}\ \bibnamefont {Gorshkov}},\ }\href
  {\doibase 10.1103/PhysRevLett.114.157201} {\bibfield  {journal} {\bibinfo
  {journal} {Phys. Rev. Lett.}\ }\textbf {\bibinfo {volume} {114}},\ \bibinfo
  {pages} {157201} (\bibinfo {year} {2015})},\ \Eprint
  {http://arxiv.org/abs/1410.3466} {arXiv:1410.3466} \BibitemShut {NoStop}%
\bibitem [{\citenamefont {Vodola}\ \emph {et~al.}(2016)\citenamefont {Vodola},
  \citenamefont {Lepori}, \citenamefont {Ercolessi},\ and\ \citenamefont
  {Pupillo}}]{Vodola2016b}%
  \BibitemOpen
  \bibfield  {author} {\bibinfo {author} {\bibfnamefont {D.}~\bibnamefont
  {Vodola}}, \bibinfo {author} {\bibfnamefont {L.}~\bibnamefont {Lepori}},
  \bibinfo {author} {\bibfnamefont {E.}~\bibnamefont {Ercolessi}}, \ and\
  \bibinfo {author} {\bibfnamefont {G.}~\bibnamefont {Pupillo}},\ }\href
  {\doibase 10.1088/1367-2630/18/1/015001} {\bibfield  {journal} {\bibinfo
  {journal} {New Journal of Physics}\ }\textbf {\bibinfo {volume} {18}},\
  \bibinfo {pages} {015001} (\bibinfo {year} {2016})},\ \Eprint
  {http://arxiv.org/abs/1508.00820} {arXiv:1508.00820} \BibitemShut {NoStop}%
\bibitem [{Note1()}]{Note1}%
  \BibitemOpen
  \bibinfo {note} {See Supplemental Material \ref
  *{proof_of_lemma_integralrepresentation} for details of the proof of Lemma~\ref
  {lemma:integralrepresentation}.}\BibitemShut {Stop}%
\bibitem [{\citenamefont {Lieb}\ and\ \citenamefont
  {Robinson}(1972)}]{Lieb1972}%
  \BibitemOpen
  \bibfield  {author} {\bibinfo {author} {\bibfnamefont {E.~H.}\ \bibnamefont
  {Lieb}}\ and\ \bibinfo {author} {\bibfnamefont {D.~W.}\ \bibnamefont
  {Robinson}},\ }\href {\doibase 10.1007/BF01645779} {\bibfield  {journal}
  {\bibinfo  {journal} {Communications in Mathematical Physics}\ }\textbf
  {\bibinfo {volume} {28}},\ \bibinfo {pages} {251} (\bibinfo {year}
  {1972})}\BibitemShut {NoStop}%
\bibitem [{\citenamefont {Hastings}(2004{\natexlab{b}})}]{Hastings2004}%
  \BibitemOpen
  \bibfield  {author} {\bibinfo {author} {\bibfnamefont {M.}~\bibnamefont
  {Hastings}},\ }\href {\doibase 10.1103/PhysRevB.69.104431} {\bibfield
  {journal} {\bibinfo  {journal} {Phys. Rev. B}\ }\textbf {\bibinfo {volume}
  {69}},\ \bibinfo {pages} {104431} (\bibinfo {year} {2004}{\natexlab{b}})},\
  \Eprint {http://arxiv.org/abs/cond-mat/0305505} {arXiv:cond-mat/0305505}
  \BibitemShut {NoStop}%
\bibitem [{\citenamefont {Nachtergaele}\ \emph {et~al.}(2011)\citenamefont
  {Nachtergaele}, \citenamefont {Vershynina},\ and\ \citenamefont
  {Zagrebnov}}]{Nachtergaele11}%
  \BibitemOpen
  \bibfield  {author} {\bibinfo {author} {\bibfnamefont {B.}~\bibnamefont
  {Nachtergaele}}, \bibinfo {author} {\bibfnamefont {A.}~\bibnamefont
  {Vershynina}}, \ and\ \bibinfo {author} {\bibfnamefont {V.~A.}\ \bibnamefont
  {Zagrebnov}},\ }\href@noop {} {\bibfield  {journal} {\bibinfo  {journal} {AMS
  Contemporary Mathematics}\ }\textbf {\bibinfo {volume} {552}},\ \bibinfo
  {pages} {161} (\bibinfo {year} {2011})},\ \Eprint
  {http://arxiv.org/abs/1103.1122} {arXiv:1103.1122} \BibitemShut {NoStop}%
\bibitem [{\citenamefont {Kliesch}\ \emph
  {et~al.}(2014{\natexlab{b}})\citenamefont {Kliesch}, \citenamefont
  {Gogolin},\ and\ \citenamefont {Eisert}}]{Kliesch2013}%
  \BibitemOpen
  \bibfield  {author} {\bibinfo {author} {\bibfnamefont {M.}~\bibnamefont
  {Kliesch}}, \bibinfo {author} {\bibfnamefont {C.}~\bibnamefont {Gogolin}}, \
  and\ \bibinfo {author} {\bibfnamefont {J.}~\bibnamefont {Eisert}},\ }in\
  \href {\doibase 10.1007/978-3-319-06379-9_17} {\emph {\bibinfo {booktitle}
  {Many-Electron Approaches in Physics, Chemistry and Mathematics}}},\ \bibinfo
  {series and number} {Mathematical Physics Studies},\ \bibinfo {editor}
  {edited by\ \bibinfo {editor} {\bibfnamefont {V.}~\bibnamefont {Bach}}\ and\
  \bibinfo {editor} {\bibfnamefont {L.}~\bibnamefont {{Delle Site}}}}\
  (\bibinfo  {publisher} {Springer},\ \bibinfo {year} {2014})\ p.\ \bibinfo
  {pages} {301},\ \Eprint {http://arxiv.org/abs/1306.0716} {arXiv:1306.0716}
  \BibitemShut {NoStop}%
\bibitem [{\citenamefont {Gong}\ \emph {et~al.}(2014)\citenamefont {Gong},
  \citenamefont {Foss-Feig}, \citenamefont {Michalakis},\ and\ \citenamefont
  {Gorshkov}}]{Gong2014}%
  \BibitemOpen
  \bibfield  {author} {\bibinfo {author} {\bibfnamefont {Z.-X.}\ \bibnamefont
  {Gong}}, \bibinfo {author} {\bibfnamefont {M.}~\bibnamefont {Foss-Feig}},
  \bibinfo {author} {\bibfnamefont {S.}~\bibnamefont {Michalakis}}, \ and\
  \bibinfo {author} {\bibfnamefont {A.~V.}\ \bibnamefont {Gorshkov}},\ }\href
  {\doibase 10.1103/PhysRevLett.113.030602} {\bibfield  {journal} {\bibinfo
  {journal} {Phys. Rev. Lett.}\ }\textbf {\bibinfo {volume} {113}},\ \bibinfo
  {pages} {030602} (\bibinfo {year} {2014})},\ \Eprint
  {http://arxiv.org/abs/1401.6174} {arXiv:1401.6174} \BibitemShut {NoStop}%
\bibitem [{\citenamefont {Matsuta}\ \emph {et~al.}(2016)\citenamefont
  {Matsuta}, \citenamefont {Koma},\ and\ \citenamefont
  {Nakamura}}]{Matsuta2016}%
  \BibitemOpen
  \bibfield  {author} {\bibinfo {author} {\bibfnamefont {T.}~\bibnamefont
  {Matsuta}}, \bibinfo {author} {\bibfnamefont {T.}~\bibnamefont {Koma}}, \
  and\ \bibinfo {author} {\bibfnamefont {S.}~\bibnamefont {Nakamura}},\ }\href
  {http://arxiv.org/abs/1604.05809} {\  (\bibinfo {year} {2016})},\ \Eprint
  {http://arxiv.org/abs/1604.05809} {arXiv:1604.05809} \BibitemShut {NoStop}%
\bibitem [{\citenamefont {Eisert}\ \emph {et~al.}(2013)\citenamefont {Eisert},
  \citenamefont {{Van Den Worm}}, \citenamefont {Manmana},\ and\ \citenamefont
  {Kastner}}]{Eisert2013b}%
  \BibitemOpen
  \bibfield  {author} {\bibinfo {author} {\bibfnamefont {J.}~\bibnamefont
  {Eisert}}, \bibinfo {author} {\bibfnamefont {M.}~\bibnamefont {{Van Den
  Worm}}}, \bibinfo {author} {\bibfnamefont {S.~R.}\ \bibnamefont {Manmana}}, \
  and\ \bibinfo {author} {\bibfnamefont {M.}~\bibnamefont {Kastner}},\ }\href
  {\doibase 10.1103/PhysRevLett.111.260401} {\bibfield  {journal} {\bibinfo
  {journal} {Phys. Rev. Lett.}\ }\textbf {\bibinfo {volume} {111}},\ \bibinfo
  {pages} {260401} (\bibinfo {year} {2013})},\ \Eprint
  {http://arxiv.org/abs/1309.2308} {arXiv:1309.2308} \BibitemShut {NoStop}%
\bibitem [{\citenamefont {Storch}\ \emph {et~al.}(2015)\citenamefont {Storch},
  \citenamefont {{Van Den Worm}},\ and\ \citenamefont {Kastner}}]{Storch2015a}%
  \BibitemOpen
  \bibfield  {author} {\bibinfo {author} {\bibfnamefont {D.~M.}\ \bibnamefont
  {Storch}}, \bibinfo {author} {\bibfnamefont {M.}~\bibnamefont {{Van Den
  Worm}}}, \ and\ \bibinfo {author} {\bibfnamefont {M.}~\bibnamefont
  {Kastner}},\ }\href {\doibase 10.1088/1367-2630/17/6/063021} {\bibfield
  {journal} {\bibinfo  {journal} {New Journal of Physics}\ }\textbf {\bibinfo
  {volume} {17}},\ \bibinfo {pages} {063021} (\bibinfo {year} {2015})},\
  \Eprint {http://arxiv.org/abs/1502.05891v1} {arXiv:1502.05891v1} \BibitemShut
  {NoStop}%
\bibitem [{Note2()}]{Note2}%
  \BibitemOpen
  \bibinfo {note} {See Supplemental Material \ref *{ms:proof_of_RL_bound} for details of the proof of Lemma~\ref
  {lemma:long-range-lieb-robinson}.}\BibitemShut {Stop}%
\bibitem [{\citenamefont {Kitaev}(2001)}]{Kitaev2001}%
  \BibitemOpen
  \bibfield  {author} {\bibinfo {author} {\bibfnamefont {A.~Y.}\ \bibnamefont
  {Kitaev}},\ }\href {\doibase 10.1070/1063-7869/44/10S/S29} {\bibfield
  {journal} {\bibinfo  {journal} {Physics-Uspekhi}\ }\textbf {\bibinfo {volume}
  {44}},\ \bibinfo {pages} {131} (\bibinfo {year} {2001})},\ \Eprint
  {http://arxiv.org/abs/cond-mat/0010440v2} {arXiv:cond-mat/0010440v2}
  \BibitemShut {NoStop}%
\bibitem [{Note3()}]{Note3}%
  \BibitemOpen
  \bibinfo {note} {See Supplemental Material \ref *{appSRPBC} for a justification of that choice.}\BibitemShut
  {Stop}%
\bibitem [{\citenamefont {Gluza}\ \emph {et~al.}(2016)\citenamefont {Gluza},
  \citenamefont {Krumnow}, \citenamefont {Friesdorf}, \citenamefont {Gogolin},\
  and\ \citenamefont {Eisert}}]{Gluza2016}%
  \BibitemOpen
  \bibfield  {author} {\bibinfo {author} {\bibfnamefont {M.}~\bibnamefont
  {Gluza}}, \bibinfo {author} {\bibfnamefont {C.}~\bibnamefont {Krumnow}},
  \bibinfo {author} {\bibfnamefont {M.}~\bibnamefont {Friesdorf}}, \bibinfo
  {author} {\bibfnamefont {C.}~\bibnamefont {Gogolin}}, \ and\ \bibinfo
  {author} {\bibfnamefont {J.}~\bibnamefont {Eisert}},\ }\href {\doibase
  10.1103/PhysRevLett.117.190602} {\bibfield  {journal} {\bibinfo  {journal}
  {Phys. Rev. Lett.}\ }\textbf {\bibinfo {volume} {117}},\ \bibinfo {pages}
  {190602} (\bibinfo {year} {2016})},\ \Eprint
  {http://arxiv.org/abs/1601.00671} {arXiv:1601.00671} \BibitemShut {NoStop}%
\bibitem [{Note4()}]{Note4}%
  \BibitemOpen
  \bibinfo {note} {See Supplemental Material \ref *{powerlawdecay} for plots of the raw data.}\BibitemShut {Stop}%
\bibitem [{\citenamefont {Katznelson}(2004)}]{Katznelson2004}%
  \BibitemOpen
  \bibfield  {author} {\bibinfo {author} {\bibfnamefont {Y.}~\bibnamefont
  {Katznelson}},\ }\href@noop {} {\emph {\bibinfo {title} {{An Introductrion to
  Harmonic Analysis}}}},\ \bibinfo {edition} {3rd}\ ed.\ (\bibinfo  {publisher}
  {Cambridge University Press},\ \bibinfo {address} {Cambridge, England},\
  \bibinfo {year} {2004})\BibitemShut {NoStop}%
\bibitem [{Note5()}]{Note5}%
  \BibitemOpen
  \bibinfo {note} {See Supplemental Material \ref *{fourieranalysis} for details of the Fourier analysis
  argument.}\BibitemShut {Stop}%
\end{thebibliography}
%

\makeatletter
\newcommand{\manuallabel}[2]{\def\@currentlabel{#2}\label{#1}}
\makeatother
\manuallabel{proof_of_lemma_integralrepresentation}{Section~A}
\manuallabel{ms:proof_of_RL_bound}{Section~B}
\manuallabel{appSRPBC}{Section~C}
\manuallabel{powerlawdecay}{Section~D}
\manuallabel{fourieranalysis}{Section~E}



\cleardoublepage
\appendix
\makeatletter
\onecolumngrid
\begingroup
\frontmatter@title@above
\frontmatter@title@format
\@title{} Supplemental Material

\endgroup
\vspace{1cm}
\twocolumngrid
\makeatother
\setcounter{page}{1}

\section{\ref*{proof_of_lemma_integralrepresentation}: Proof of Lemma~\ref*{lemma:integralrepresentation}}
For the readers convenience we include a proof of our Lemma~\ref*{lemma:integralrepresentation}, which is a result from \cite{Hastings2004a} of the main text.
\begin{proof}[Proof of Lemma~\ref*{lemma:integralrepresentation}]
  Let us consider the operator $A$ with matrix elements $A_{ij}$ in some basis of eigenvectors of the Hamiltonian $H$.
  Let $E_i$, be the energy of the $i$-th eigenvector.
  Define $A_\omega$ element wise via $(A_\omega)_{ij} \coloneqq A_{ij}\,\delta(E_i-E_j-\omega)$, then $A=\int A_\omega \diff \omega$ and we can write $\expect{A_\omega\,B}_\beta = Z^{-1}\sum_{i,j}\delta(E_i-E_j-\omega)\,A_{ij}\,B_{ji}\,\e^{-\beta\,E_i}$ and similarly $\expect{B\,A_\omega}_\beta = Z^{-1}\sum_{i,j} \delta(E_i-E_j-\omega)\,B_{ji}\,A_{ij}\,\e^{-\beta\,E_j}$.
  However, $\delta(E_i-E_j-\omega)\,\e^{-\beta\,E_j} = \delta(E_i-E_j-\omega)\,\e^{-\beta\,E_i}\,\e^{\beta\,\omega}$, and thus,  $\expect{B\,A_\omega}_\beta = \expect{A_\omega\,B}_\beta\,\e^{\beta\,\omega}$.
  Hence,
  \begin{equation}\label{analyticseq03}
    \expect{ A_\omega\,B }_\beta = \frac{1}{1+\e^{\beta\,\omega}} \expect{ \left\{ A_\omega,B\right\} }_{\beta}.
  \end{equation}
  Next, we use that $(1+\e^{\beta\,\omega})^{-1} = 1/2-\beta^{-1}\sum_{n\text{   odd}}(\omega-\ii\,n\,\pi/\beta)^{-1}$, where the sum ranges over all positive and negative odd $n$.
  For $n>0$, we have $(\omega -\ii\,n\,\pi/\beta)^{-1}=\ii\int_{0}^{\infty} \e^{-(\ii\,\omega+n\,\pi/\beta)\,t}\diff t$.
  Similarly, for $n<0$, we have $(\omega -\ii\,n\,\pi/\beta)^{-1}=-\ii\int_0^{\infty} \e^{(\ii\,\omega+n\,\pi/\beta)\,t}\diff t$.
  Thus,
  \begin{equation}\label{analyticseq04}
    \frac{1}{1+\e^{\beta\,\omega}} = \frac{1}{2}+\frac{\ii}{\beta} \int_0^{\infty} \frac{\e^{\ii\,\omega\,t}- \e^{-\ii\,\omega\,t}}{\e^{\pi\,t/\beta}-\e^{-\pi\,t/\beta}}\diff t .
  \end{equation}
  Due to the linearity of time evolution $A(t) \coloneqq \e^{\ii\,H\,t}\,A\,\e^{-\ii\,H\,t}$ we have $A_\omega(t)=\e^{\ii\,\omega\,t}\,A_\omega$.
  Therefore, substituting Eq.~\eqref{analyticseq04} into Eq.~\eqref{analyticseq03}, we get
  \begin{equation}\label{analyticseq05}
    \begin{split}
      \expect{A_\omega\,B}_\beta &= \frac{1}{2} \expect{\{A_\omega,B\}}_\beta \\
      &+ \frac{\ii}{\beta} \int_0^\infty \frac{\expect{\{A_\omega(t) - A_\omega(-t), B\}}_\beta}{\e^{\pi\,t/\beta}-\e^{-\pi\,t/\beta}} \diff t .
    \end{split}
  \end{equation}
  Finally, by integrating Eq.~\eqref{analyticseq05} over $\omega$ we get Eq.~\eqref{eq:integralrepresentation}.
\end{proof}

\section{\ref*{ms:proof_of_RL_bound}: Proof of Lemma~\ref{lemma:long-range-lieb-robinson}}
Here we discuss how to prove Lemma~\ref{lemma:long-range-lieb-robinson} following the strategy outlined in \cite{Foss-Feig2014} of the main text.
\begin{proof}[Proof of Lemma~\ref{lemma:long-range-lieb-robinson}]
  As in \cite{Foss-Feig2014} of the main text the Hamiltonian is separated into a finite-range and a long-range part.
  All interactions over distances up to some length $\chi$ go into the finite-range part of the Hamiltonian
  \begin{equation}
    H_{\mathrm{FR}} \coloneqq \sum_{\kappa,i,j\colon d_{i,j}\leq\chi} J_{i,j}^{(\kappa)}\,V_i^{(\kappa)}\,V_j^{(\kappa)}
  \end{equation}
  and all others into the long-range part.
  As $\chi$ is later chosen to grow with time, one should think of both parts of the Hamiltonian as piece wise constant in time.
  For any operator $A$ let $\mathcal{A}(t)$ be the time evolution of $A$ under $H_{\mathrm{FR}}$ only.
  Due to standard Lieb-Robinson bounds the time evolution under the finite-range part is quasi-local, i.e., $\mathcal{A}(t)$ can be decomposed into a sum of operators $\sum_l^\infty \mathcal{A}^l(t)$, each supported only on the support of $A$ and a border of width $\chi\,l$ around it.
  The norm of these operators can be bounded proportional to $\exp(v(\chi)\,t-l)$ with the speed
  \begin{align}
    v (\chi) &\coloneqq 4\,\exp(1)\,\sup_i \sum_{\kappa,j\colon d_{i,j} \leq \chi} J_{i,j}^{(\kappa)}\\
             &\leq 4\,\exp(1)\,J\,2^D \sum_{d=1}^\chi d^{D-1-\alpha} .
  \end{align} 
  It is crucial that the speed $v(\chi)$ of the finite-range Hamiltonian $H_{\mathrm{FR}}$ can be bounded independently of $\chi$ by $v(\chi) \leq v \coloneqq 4\,J\,\exp(1)\,2^D\,\zeta(1+\alpha-D) \leq 8\,J\,\exp(1)\,2^D$, where $\zeta$ is the Riemann zeta function and we have used that $\alpha > 2\,D$.
  In particular Eqs.~(S3) and (S8) from the Supplemental Information of \cite{Foss-Feig2014} from the main text also hold in our setting with an anti-commutator instead of a commutator.
  One then makes use of the interaction picture to bound the additional growth of the support due to the long-range part.
  We define $C_r(t) \coloneqq \|\{A(t),B\}\|$ in analogy to the quantity introduced in Eq.~(S9) of the Supplementary Information of \cite{Foss-Feig2014} from the main text and proceed as in Section~S2.
  Let $\mathcal{U}(t)$ be the interaction picture unitary, i.e., the unitary for which for any operator $A$ it holds that $A(t) = \mathcal{U}^\dagger(t)\,\mathcal{A}(t)\,\mathcal{U}(t)$.
  The idea is now to introduce the generalized (two-time) anti-commutator
  \begin{equation}
    C_r^l(t,\tau) \coloneqq \{ \mathcal{A}^l(t), \mathcal{U}(t)\,B\,\mathcal{U}^\dagger(t)\} 
  \end{equation}
  (instead of the commutator) with the property that $\|\sum_l C_r^l(t,t)\| = C_r(t)$.
  By using the von Neumann equation and the equality
  \begin{equation}
    \{ A, [B,C] \} = \{ C, [A,B] \} + [B, \{ C,A \}] ,
  \end{equation}
  (instead of the Jacobi identity) one obtains a differential equation for $C_r^l(t,\tau)$ equivalent to Eqs.~(S11) and (S16) from \cite{Foss-Feig2014} of the main text with the outer commutator in the second term replaced by an anti-commutator.
  After employing the bound (S17), also in the fermionic case, a part of the right hand side can be identified to be $C_r^l(t)$ allowing for the same type of recursive bound on $\|C_r^l(t,t)\|$.
  As everything is now reduced to scalars, one can proceed completely analogous to the proof in \cite{Foss-Feig2014} of the main text to obtain, with $c_0$, $c_1$, and $\vartheta$ constants,
  \begin{equation} 
    C_r(t) \leq c_0 \e^{v\,t-r/\chi} + c_1 \e^{v_\chi\,t} (\chi\,v\,t/r)^\alpha
  \end{equation}
  where
  \begin{equation}
    v_\chi \leq \vartheta\,t^D\,\chi^{2\,D-\alpha} ,
  \end{equation}
  which is the analogue to Eq.~(18) in \cite{Foss-Feig2014} of the main text.

  That Lemma~\ref{lemma:long-range-lieb-robinson} is restricted to $\alpha > 2\,D$ is a consequence of the above bound on $v_\chi$, which becomes small for large $\chi$ only if $\alpha > 2\,D$.
  It can be shown to hold as follows.

  The quantity $v_\chi$ is defined as $v_\chi \coloneqq \vartheta'\,(\chi\,v\,t)^D\,\lambda_\chi$ with
  \begin{align}
    \lambda_\chi &\leq \sum_{d=\chi+1}^\infty J\,d^{-\alpha}\,2\,(2\,d)^{D-1} \\
                 &= 2^D\,J\,\sum_{d=\chi+1}^\infty d^{D-\alpha-1} \\
                 &= 2^D\,J\,\zeta(\alpha-D+1,\chi+1) 
  \end{align}
  where $\zeta$ is the Hurwitz zeta function (a generalization of the Riemann zeta function).
  In total this gives
  \begin{equation}
    v_\chi \leq \vartheta'\,(2\,\chi\,v\,t)^D\,J\,\zeta(\alpha-D+1,\chi+1) ,
  \end{equation}
  and it remains to show a bound on $\zeta$ for large $\chi$.
  We make use of the following integral representation of $\zeta$, valid for all $\alpha-D+1>0$ and $\chi>0$:
  \begin{equation}
    \zeta(\alpha-D+1,\chi) = \Gamma(\alpha-D+1)^{-1}\,\int_0^\infty \frac{x^{\alpha-D}\,\e^{-\chi\,x}}{1-\e^{-x}} \diff x
  \end{equation}
  The integrand can be bounded using 
  \begin{equation}
    \frac{1}{e^{x/2}-e^{-x/2}} \leq \frac{1}{x} \Longrightarrow \frac{1}{1-e^{-x}} \leq \frac{e^{x/2}}{x} ,
  \end{equation}
  which, as long as $\alpha>D$, allows to compute the resulting integral explicitly
  \begin{align}
    &\int_0^\infty x^{\alpha-D-1}\,\e^{-(\chi-1/2)\,x} \diff x\\
    ={} & \Gamma(\alpha-D)\,(\chi-1/2)^{D-\alpha} ,
  \end{align}
  which yields the following bound on $v_\chi$
  \begin{equation}
    v_\chi \leq \vartheta'\,(2\,v)^D\,J\,\frac{\Gamma(\alpha-D)}{\Gamma(\alpha-D+1)}\,t^D\,\chi^D\,(\chi-1/2)^{D-\alpha} .
  \end{equation}

  Proceeding as in \cite{Foss-Feig2014} of the main text one obtains Lemma~\ref{lemma:long-range-lieb-robinson} as stated in the main text.
\end{proof}

\section{\ref*{appSRPBC}: PBC implies short-range interactions}
Here we show that the long-range contribution of the Hamiltonian \eqref{eq:hamiltonianLR},
\begin{equation}
  H_{\mathrm{LR}} \coloneqq \frac{\Delta}{2} \sum_{i=1}^L \sum_{j=1}^{L-1}d_j^{-\alpha}\left( a_i\, a_{i+j} + a_{i+j}^\dagger\,a_i^\dagger \right) ,
\end{equation}
is only non-negligible when antiperiodic boundary conditions are considered, that is, for $i>L$ we set $a_{i} \coloneqq -a_{i \mod L}$.
For simplicity, we study the problem for both periodic and antiperiodic boundary conditions and make use of a parameter $p$ which characterizes the boundary conditions, such that $p=+1$ corresponds to periodic boundary conditions (PBC) and $p=-1$ corresponds to antiperiodic (ABC).

First, we analyze the first term of the long-range term \eqref{eq:hamiltonianLR} (the annihilation-annihilation term) and divide the sum in $j$ into two contributions, such as
\begin{align}
  H_{\mathrm{LR}}^{\mathrm{a-a}}&=\sum_{i=1}^L \sum_{j=1}^{L-1}d_j^{-\alpha}\,a_i\,a_{i+j} \\
  \label{eqLRaa}
                                &= \sum_{i=1}^L \left( \sum_{j=1}^{L-i}d_j^{-\alpha}\,a_i\,a_{i+j}+\sum_{j=L-i+1}^{L-1}d_j^{-\alpha} a_i\,a_{i+j} \right).
\end{align}

Given this, we apply the boundary conditions and introduce a change of indexes $j' = i+j-L$ for the second term, such that
\begin{equation}
  \sum_{i=1}^{L} \sum_{j'=1}^{i-1} d_{j'+L-i}^{-\alpha}\,a_i\,a_{j'+L} = p \sum_{i=1}^{L} \sum_{j'=1}^{i-1} d_{j'+L-i}^{-\alpha}\,a_i\,a_{j'}.
\end{equation}

Then we reorder the sums as follows,
\begin{equation}
  p \sum_{i=1}^{L} \sum_{j'=1}^{i-1} d_{j'+L-i}^{-\alpha}\,a_i\,a_{j'}=p \sum_{j'=1}^{L} \sum_{i=j'+1}^{L} d_{j'+L-i}^{-\alpha}\,a_i\,a_{j'}.
\end{equation}
We apply the canonical commutation relation $\lbrace a_i,a_k \rbrace =0$, make two changes of indexes: first, $j'' \rightarrow i-j'$ and, second, $i\rightarrow j'$ and $j\rightarrow j''$; and apply $d_{L-j}=d_j$. Finally, we get
\begin{align}
  p \sum_{j'=1}^{L} \sum_{i=j'+1}^{L} d_{j'+L-i}^{-\alpha}\,a_i\,a_{j'}&=-p \sum_{j'=1}^{L} \sum_{j''=1}^{L-j'} d_{L-j''}^{-\alpha}\,a_{j'}\,a_{j'+j''}\\
  \label{eqLRaa2}
                                                                       &=-p \sum_{i=1}^{L} \sum_{j=1}^{L-i} d_{j}^{-\alpha}\,a_{i}\,a_{i+j}.
\end{align}
We substitute the equation \eqref{eqLRaa2} into the term \eqref{eqLRaa}, such that
\begin{equation}\label{eqLRaasolved}
  H_{\mathrm{LR}}^{\mathrm{a-a}} = \sum_{i=1}^L \left( \sum_{j=1}^{L-i}d_j^{-\alpha}\,a_i\,a_{i+j}-p \sum_{j=1}^{L-i} d_{j}^{-\alpha}\,a_{i}\,a_{i+j} \right)\text{.}
\end{equation}

Given this expression, it is clear that this term and its conjugate cancel for periodic boundary conditions.
We can conclude then that the long-range term does not contribute for PBC and for any interaction exponent $\alpha$.
On the other hand, the long-range term \eqref{eq:hamiltonianLR} for antiperiodic boundary conditions is not null and can be reexpressed as
\begin{equation}
  H_{\mathrm{LR}}=\Delta\,\sum_{i=1}^{L} \sum_{j=1}^{L-i} d_j^{-\alpha}\,\left( a_i\,a_{i+j}+a_{i+j}^{\dagger}\,a_i^{\dagger} \right).
\end{equation}

\onecolumngrid
\newpage
\section{\ref*{powerlawdecay}: Power-law decay of correlations in the Kitaev chain}
\begin{figure*}[h]
  \includegraphics[height=0.23\textheight]{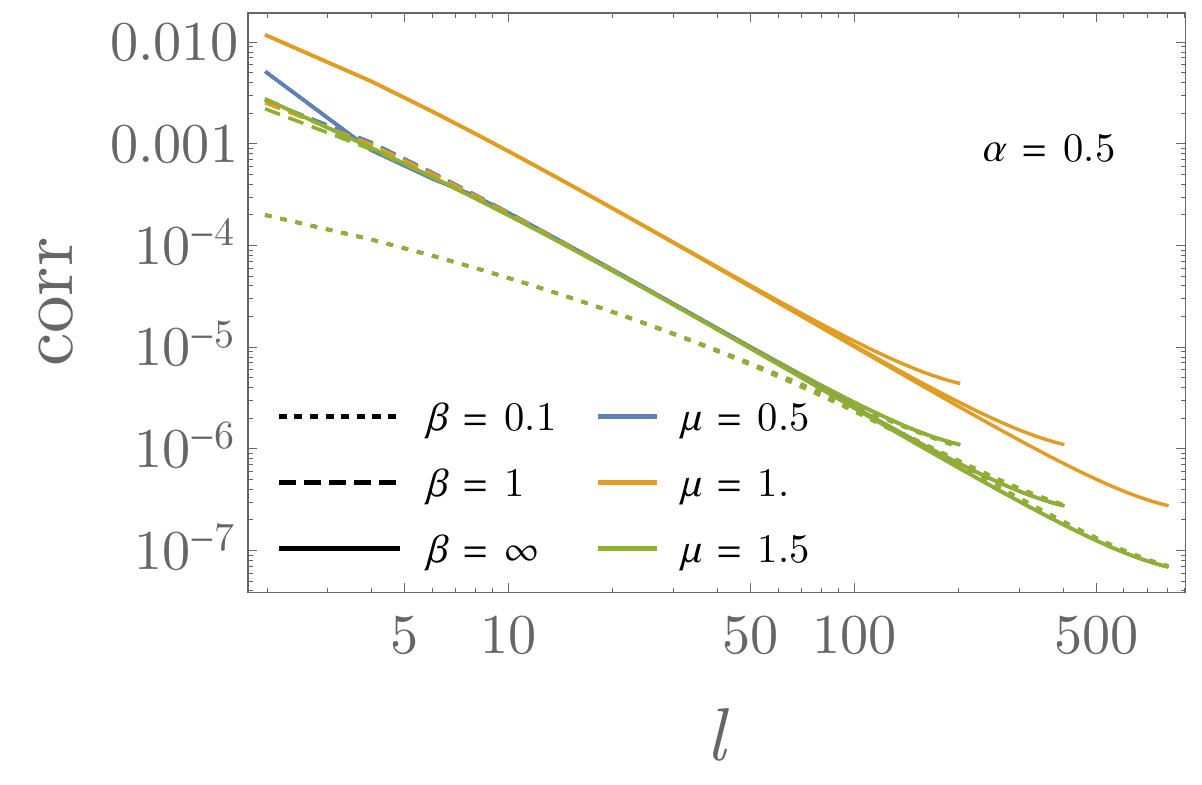}\hfill\includegraphics[height=0.23\textheight]{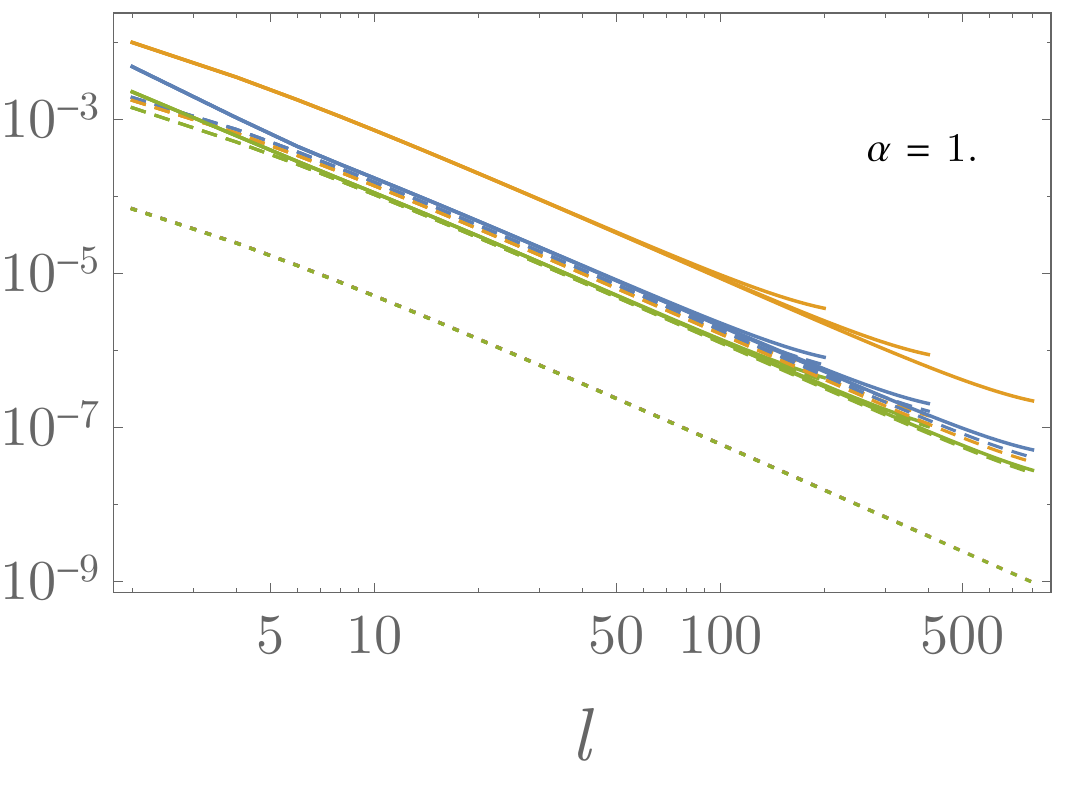}\\\hspace{8mm}\includegraphics[height=0.23\textheight]{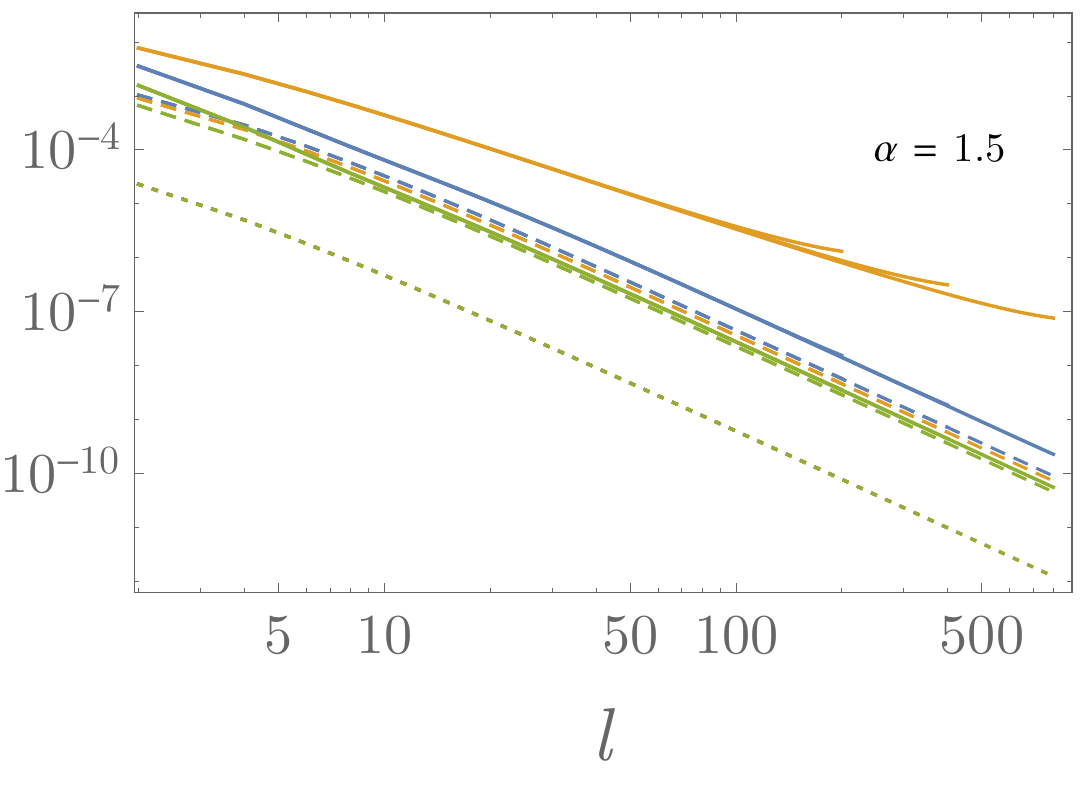}\hfill\includegraphics[height=0.23\textheight]{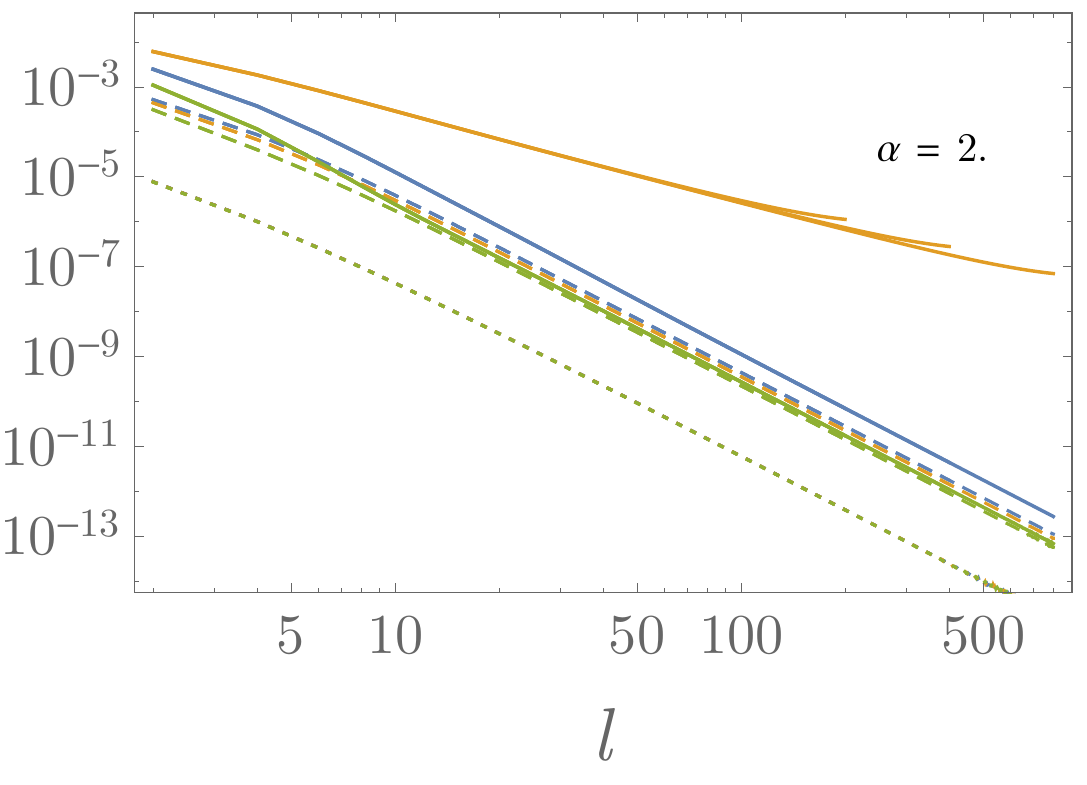}
  \caption{\label{fig01} Double logarithmic plots of the correlations $\corr$ as a function of the distance $l$ for $\alpha=0.5,1,1.5,2$ from left to right.
    The blue, orange and green lines correspond to $\mu=0.5,1.0,1.5$.
    The different line styles indicate different inverse temperatures, namely $\beta=0.1,1.0,\infty$ respectively.
    For each combination we overlay curves for chain lengths $L=500,1000,2000$ to visualize the influence of finite size effects.
    For large $\alpha$ and high temperatures a bending of the curves at short distances is visible, reminiscent of the transient behavior observed for $\alpha>1$ at $T=0$ in \cite{Vodola2014c} of the main text.
    The exponents shown in Figure~\ref{fig02} in the main text were determined by linear fits to the logarithmized data in the range $l\in[l_{\mathrm{min}},300]$ with $l_{\mathrm{min}} = 200$ except for $\alpha\geq2$, where $l_{\mathrm{min}}=50$ for $\beta=1,\infty$ and $l_{\mathrm{min}}=20$ for $\beta=0.1$.
    Data with $\corr < \e^{-32}$ were discarded.
    The remaining data is almost perfectly linear in the double logarithmic plot.
  }
\end{figure*}

\twocolumngrid
\section{\ref*{fourieranalysis}: Fourier analysis}
Here we compare our result with what can be obtained using tools from Fourier analysis.
It is known that one can essentially show the following (some additional conditions omitted for the sake of brevity, see \cite[Section I.4]{Katznelson2004} of the main text for more details):
\renewcommand{\theenumi}{\roman{enumi}}%
\begin{enumerate}
\item If the absolute values $|f_k|$ of the Fourier coefficients of a function $f$ decay slightly faster than $|k|^{-\alpha}$, then $f$ is almost $(\alpha-1)$-times continuously differentiable.
\item If a function $f$ is $\alpha'$-times continuously differentiable, then the absolute values $|f_k|$ of its Fourier coefficients decay like $|k|^{-\alpha'}$.
\end{enumerate}
\renewcommand{\theenumi}{enumi}%
If the Hamiltonian $H$ of a 1D long range system is quadratic and translation invariant, then the Hamiltonian matrix $h_{ij} \in \mathcal{O}(|i-j|^{-\alpha})$ is circulant and its first row can be thought of as the Fourier coefficients of a function $f(x)$ that is almost $(\alpha-1)$-times continuously differentiable. In turn, $\corr(a_i,a_{i+l})_\beta$ can be thought of as the Fourier coefficients of the function $g(x) \coloneqq 1/(1+\e^{\beta\,f(x)})$, which is also almost $(\alpha-1)$-times continuously differentiable, and thus $\corr(a_i,a_{i+l}) \in \mathcal{O}(|l|^{-\alpha+1})$.

\end{document}